\newcounter{ifsup}
\DeclareFontFamily{U}{cbgreek}{}
\DeclareFontShape{U}{cbgreek}{m}{n}{
        <-6>    grmn0500
        <6-7>   grmn0600
        <7-8>   grmn0700
        <8-9>   grmn0800
        <9-10>  grmn0900
        <10-12> grmn1000
        <12-17> grmn1200
        <17->   grmn1728
      }{}
\DeclareFontShape{U}{cbgreek}{bx}{n}{
        <-6>    grxn0500
        <6-7>   grxn0600
        <7-8>   grxn0700
        <8-9>   grxn0800
        <9-10>  grxn0900
        <10-12> grxn1000
        <12-17> grxn1200
        <17->   grxn1728
      }{}
\newcommand{\normalorbold}{%
  \ifnum\pdf@strcmp{\math@version}{bold}=\z@ bx\else m\fi
}
\newtheorem{theorem}{Theorem}
\newtheorem{athm}{Theorem}[section]
\newtheorem{propm}{Proposition}
\newtheorem{obs}[athm]{Observation}
\newtheorem{prop}[athm]{Proposition}
\newtheorem{alem}[athm]{Lemma}
\newtheorem{coro}[athm]{Corollary}
\newtheorem{definition}{Definition}
\newcommand*{\eins}{\ensuremath{\mathbbm 1}}
\def\gbm#1{{\let\lambda\uplambda \let\mu\upmu \let\rho\uprho \let\sigma\upsigma \let\tau\uptau \let\eta\upeta \bm{#1}}}
\newcommand*{\bbR}{\mathbb{R}}
\newcommand*{\cC}{\mathcal{C}}
\newcommand*{\cD}{\mathcal{D}}
\newcommand*{\cO}{\mathcal{O}}
\newcommand*{\cT}{\mathcal{T}}
\newcommand*{\sC}{\mathscr{C}}
\newcommand*{\sth}{\vartheta}
\newcommand*{\bA}{\mathbf{A}}
\newcommand*{\bB}{\mathbf{B}}
\newcommand*{\bC}{\mathbf{C}}
\newcommand*{\bM}{\mathbf{M}}
\newcommand*{\rB}{\mathrm{B}}
\newcommand*{\rd}{\mathrm{d}}
\newcommand*{\rI}{\mathrm{I}}
\newcommand*{\rS}{\mathrm{S}}
\newcommand*{\ket}[1]{\left|#1\right\rangle}
\newcommand*{\bra}[1]{\left\langle #1\right|}
\newcommand*{\proj}[1]{\ket{#1}\bra{#1}}
\newcommand*{\Tr}{\mathrm{Tr}}
\newcommand*{\fr}[2]{\frac{#1}{#2}}
\newcommand{\vect}[1]{\mathbf{#1}}
\newcommand{\be}{\begin{equation}}
\newcommand{\ee}{\end{equation}}
\newcommand{\n}{\textendash}
\newcommand{\m}{\textemdash}
\DeclareMathOperator*{\argmax}{arg\,max}
\begin{document}
\title{Quantifying memory capacity as a quantum thermodynamic resource}
\date{February 13, 2019}
\author{Varun Narasimhachar}
\email{nvarun@ntu.edu.sg}
\affiliation{Complexity Institute and School of Physical and Mathematical Sciences, Nanyang Technological University, 50 Nanyang Ave, Singapore 639798.}
\author{Jayne Thompson}
\affiliation{Centre for Quantum Technologies, National University of Singapore, Block S15, 3 Science Drive 2, 117543.}
\author{Jiajun Ma}
\affiliation{Center for Quantum Information, Institute for Interdisciplinary Information Sciences, Tsinghua University, 100084 Beijing, China.}
\author{Gilad Gour}
\affiliation{Institute for Quantum Science and Technology and Department of Mathematics and Statistics, University of Calgary, 2500 University Drive NW, Calgary, Alberta, Canada T2N 1N4}
\author{Mile Gu}
\email{mgu@quantumcomplexity.org}
\affiliation{Complexity Institute and School of Physical and Mathematical Sciences, Nanyang Technological University, 50 Nanyang Ave, Singapore 639798.}
\affiliation{Centre for Quantum Technologies, National University of Singapore, Block S15, 3 Science Drive 2, 117543.}

\begin{abstract}
The information\hyp carrying capacity of a memory is known to be a thermodynamic resource facilitating the conversion of heat to work. Szilard's engine explicates this connection through a toy example involving an energy\hyp degenerate two\hyp state memory. We devise a formalism to quantify the thermodynamic value of memory in general quantum systems with nontrivial energy landscapes. Calling this the \emph{thermal information capacity}, we show that it converges to the non\hyp equilibrium Helmholtz free energy in the thermodynamic limit. We compute the capacity exactly for a general two\hyp state (qubit) memory away from the thermodynamic limit, and find it to be distinct from known free energies. We outline an explicit memory\n bath coupling that can approximate the optimal qubit thermal information capacity arbitrarily well.
\end{abstract}

\maketitle

Szilard's adaptation of the \emph{Maxwell's demon} thought experiment, supplemented by Landauer's principle, illustrates a compelling connection between an entity's capacity to store information, on the one hand, and its capacity to deliver thermodynamic work, on the other~\cite{Szilard,LR14,Landauer,Bennett03}. In particular, the Szilard engine relies on storing information in an energy\hyp degenerate two\hyp state memory system. If such a memory is initialized in some pure state, one bit of information can be recorded onto it without expending any free energy. At the other extreme, if the memory started out in a maximally mixed state, no further information could be encoded onto it without first erasing its contents, which would entail tapping into an external free energy source.

This simple special case exemplifies a deeper connection between a memory's information capacity and its \emph{athermality} (i.e.\ departure from thermal equilibrium). How does this connection manifest in a general scenario where the memory is quantum mechanical, with internal states of differing energetic values? The athermality of such a memory may involve coherent superposition of energy eigenstates \cite{BCP14,XLF15,YZCM15,SSDBA15,WY16,CG16,CH16,MYGVG16,PJF16,ZSLF16}. In addition to the classical laws of thermodynamics, more general principles of nonequilibrium quantum thermodynamics \cite{SZAW03,SU09,KSdLU11,DRRV11,Reth,Nan,SSP14,BVMG15,PHS15,Sec,NU,GJBDM17} would then apply.
\begin{figure}[h]
    \includegraphics[width=\columnwidth]{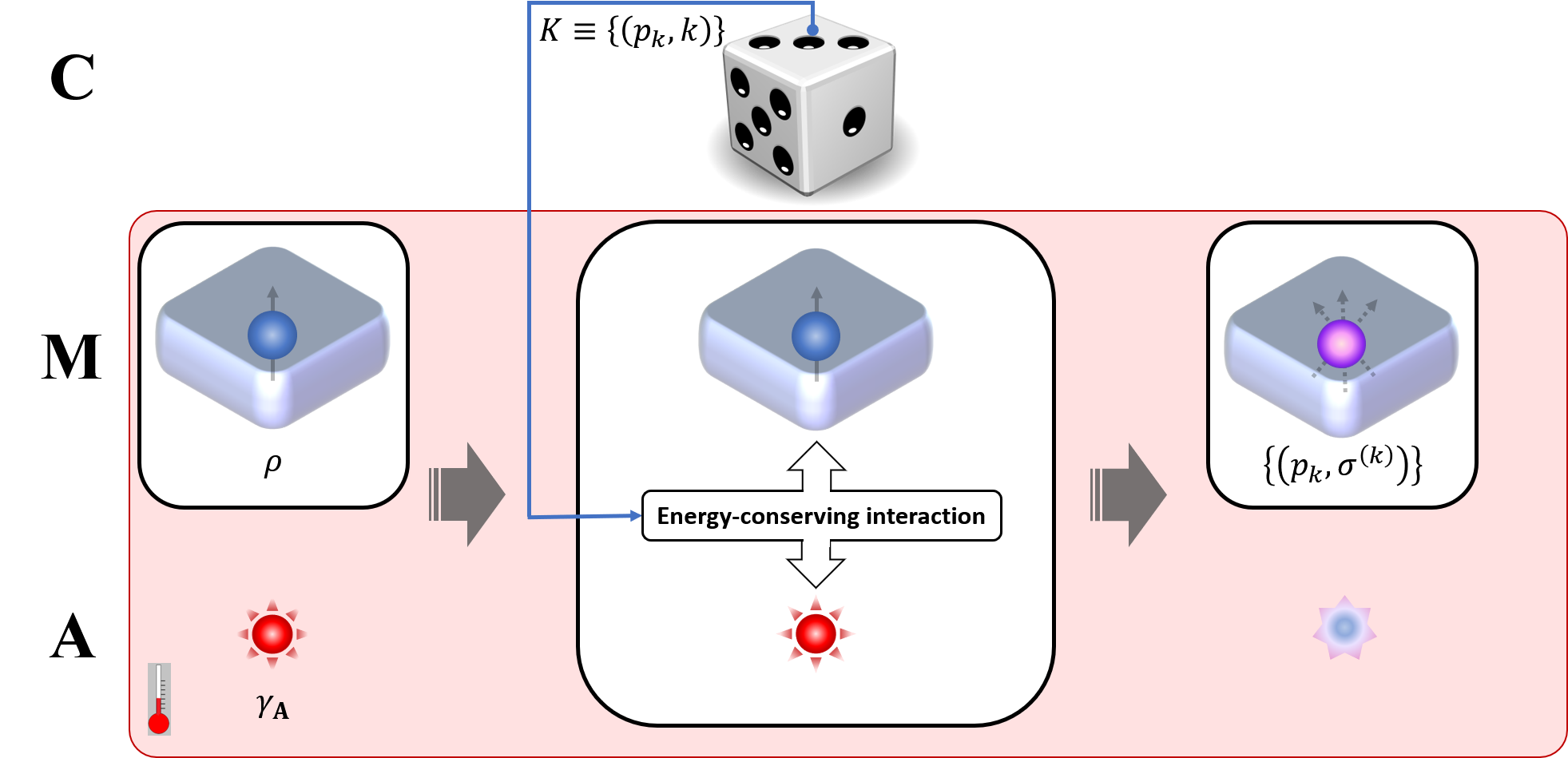}
    \caption{Thermally passive memory: the pertinent information variable $K\equiv\left\{\left(p_k,k\right)\right\}$ (contained in a classical system $\bC$) is recorded on the quantum memory $\bM$ through a $k$\hyp dependent energy\hyp conserving interaction of $\bM$ with an auxiliary system $\bA$ in its thermal state $\gamma_\bA$, transforming the memory's initial (``blank tape'') state $\rho$ to the ensemble $\{(p_k,\sigma^{(k)})\}$ of quantum state\n valued codewords. No free energy is used in this process, except that already present in $\rho$.}\label{figPM}
\end{figure}

In this Letter we formalize the thermodynamic value of memory capacity for quantum systems with general energy landscapes. To this end, we conceptualize a \emph{thermally passive memory}: writing onto such a memory is constrained to use no thermodynamic resource other than what the memory's initial state carries intrinsically. We define the resulting capacity as the given state's \emph{thermal information capacity}. We show that in the thermodynamic limit, this measure recovers the standard non\hyp equilibrium free energy. We also compute the thermal information capacity exactly for the case of a single two\hyp level (``qubit'') system away from the thermodynamic limit, establishing it as a distinct measure of athermality with operational relevance. We discuss a potential practical scheme to write onto a single\hyp qubit memory at a rate arbitrarily close to the capacity, finding a tradeoff between implementation speed and closeness of approximation.


\vspace{2ex}

\noindent\textbf{Framework.} We aim to capture the precise relationship between the thermodynamic inequilibrium (``athermality'') in arbitrary quantum states $\rho$ of a memory, and its storage capacity. To this end, we envision a \emph{thermally passive memory} (Fig.~\ref{figPM}): a quantum system $\bM$, initialized in state $\rho$, in a thermal environment of uniform temperature $T$. This effectively cuts off $\bM$ from any external sources of free energy, rendering it thermally passive.

Consider an arbitrary classical random variable  $K\equiv\left\{\left(p_k,k\right)\right\}$, representing information to be recorded onto $\bM$. This entails applying some $k$\hyp dependent operation on $\bM$ that transforms $\rho$ to a corresponding ``codeword'' state $\sigma^{(k)}$. Now, if $\rho$ were the state of thermal equilibrium, it would be impossible to take $\bM$ to any other state passively, making it useless as a memory. Any capacity for $\bM$ to passively record information, therefore, owes to the athermality of $\rho$.

Thermally passive encoding is formally captured by \emph{thermal operations} \cite{Nan}, which describe the possible state transformations of a system in contact with a single thermal bath. Left to equilibrate with the environment, $\bM$ would eventually reach its thermal, or Gibbs, state $\gamma\propto\exp\left(-H_\bM/k_\rB T\right)$, where $H_\bM$ is its free Hamiltonian. In this process of \emph{thermalization}, $\bM$ loses free energy and all other aspects of thermodynamic resourcefulness. A thermal operation is a more general type of resource\hyp depleting process, of which thermalization is a special case. It is an interaction of $\bM$ with a thermal auxiliary system $\bA$ (Fig.~\ref{figPM}): It starts with $\bM$ in some initial state $\rho$ uncorrelated with $\bA$ (which, by virtue of being thermal, is in its own local Gibbs state $\gamma_\bA$), followed by turning on an arbitrary energy\hyp conserving interaction between $\bM$ and $\bA$, and then decoupling the two again. The choice of system $\bA$ is left arbitrary, so long as it is prepared in the Gibbs state determined by its own Hamiltonian and the bath's temperature.

Passively encoding the classical variable $K$ on $\bM$ effectively transforms its state from $\rho$ to an ensemble $\cC\equiv\left\{\left(p_k,\sigma^{(k)}\right)\right\}$, where $\sigma^{(k)}=\cT^{(k)}(\rho)$ with $\cT^{(k)}$ some thermal operation for every $k$. The maximum amount of information that can be reliably recovered from $\cC$ by unrestricted readout is then given by its \emph{Holevo information}:
\be\label{defchi}
\chi\left(\cC\right)=S\left(\sum_kp_k\sigma^{(k)}\right)-\sum_kp_kS\left(\sigma^{(k)}\right),
\ee
where $S(\cdot)$ denotes the \emph{von Neumann entropy}. For a given initial state $\rho$, define $\sC(\rho)$ as the set of all codes $\cC$ consisting of codewords $\sigma^{(k)}$ accessible from $\rho$ by thermal operations\footnote{Equivalently, thermally passive encoding can be represented in terms of
  classical\n quantum (CQ) states of the classical variable and the memory,
  whereby $\sC (\rho )$ corresponds to the set of all CQ states accessible from
  $\rho $ under a generalized class of processes called \emph {conditioned
  thermal operations} \cite {NG17}. Since this more general framework is
  related but not essential to the present work, we shall state all of our
  results within the thermal operations framework.}. This set represents all possible ways that classical information can be written passively onto $\bM$, allowing arbitrary variations in the classical variable $K$ being written. Our main quantity of interest is the optimal amount of information that can be written in this way, given an initial resource state:
\begin{definition}[Thermal information capacity]\label{defIC}
The thermal information capacity (TIC) of the thermal memory $\bM$ initialized in blank state $\rho$ is defined as
\be
I_\mathrm{th}\left(\rho\right):=\sup\limits_{\cC\in\sC(\rho)}\chi\left[\cC\right].
\ee
\end{definition}

From the properties of the Holevo information, it follows that the TIC is always nonnegative. Another property of the TIC is that, as a function of the input state, it is strictly non\hyp increasing under thermal operations:
\be
I_\mathrm{th}\left(\cT[\rho]\right)\le I_\mathrm{th}\left(\rho\right)\quad\forall\rho,~\forall\mathrm{~thermal~operations~}\cT.
\ee
Thus, as expected, the TIC is a measure of thermodynamic resourcefulness of the state $\rho$, akin to free energy functions: a thermal operation acting on a given state can only result in a state with equal or lower TIC. It vanishes only when $\rho=\gamma$, and is positive otherwise.

Note that $I_\mathrm{th}$ is the \emph{absolute maximum} amount of information that we can encode within $\bM$, in a single shot, without energy expenditure. In particular, we assume no restriction on the operations required to \emph{decode} $K$ from $\sigma^{(k)}$, either to single\hyp shot processing or by energy considerations. Our primary motivation here is foundational: this allows us to study the efficacy of the writing process considered in isolation (e.g.\ the first step in a Szilard engine), and relate it to the initial athermality in $\bM$. Nevertheless, $I_\mathrm{th}$ has direct operational relevance in the context of remote probes operating in energy\hyp depleted environments. Such probes are constrained in their ability to harness free energy to store the information in their environment. The readout of this information may not need to be executed immediately, and may instead be deferred for more favourable conditions (e.g.\ after the probe has returned to a powered central facility). Examples of such settings arise in quantum sensing, where the operations that encode environmental data are generally thermal (e.g.\ unitary Hamiltonian evolution in metrology \cite{QMetro1,QMetro2}, or beamsplitter interactions with a thermal environment in the case of quantum illumination \cite{Lloyd08,Tan08,Bradshaw17}).\vspace{2ex}

\noindent\textbf{Thermodynamic limit.} A helpful point to start investigating the TIC is to consider its \emph{thermodynamic limit}, which concerns the average behaviour over a large number of independent, identically\hyp prepared (i.i.d.)\ instances. What is the optimal TIC per copy of a resource state $\rho$, in the thermodynamic limit? More precisely, this quantity is defined as
\be\label{ICiid}
I_{\mathrm{th}}^\infty\left(\rho\right):=\lim_{m\to\infty}\fr{I_\mathrm{th}\left(\rho^{\otimes m}\right)}m.
\ee
Apart from its own operational significance, the limiting i.i.d.\ value is useful as an upper bound on the single\hyp copy TIC. While the latter is in general difficult to compute, the i.i.d.\ limit can be calculated exactly using the theory of asymptotic equipartition, leading to the following result.
\begin{propm}\label{piid}
In the thermodynamic limit of infinitely many, independent and identically distributed (i.i.d.)\ copies, the optimal thermal information capacity per copy of a memory state $\rho$ is given by $I_{\mathrm{th}}^\infty=F(\rho)$, a quantum non\hyp equilibrium generalization of the Helmholtz free energy\footnote{The exact relationship between $F(\rho )$ and the Helmholtz free energy $A$ is
  given by $F(\rho )=\fr 1{k_\rB T}\left [A(\rho )-A(\gamma )\right ]$, where
  $\gamma $ is the Gibbs state. $F(\rho )$ equals the maximum expected work
  extracted in a thermal process acting on initial state $\rho $, quantified in
  units of ``work bits'', i.e.\ single energy\hyp degenerate qubit systems in
  pure states \cite {Nan,SSP14,Sec}.}, defined as the quantum relative entropy of $\rho$ with respect to the Gibbs state $\gamma$:
\be\label{GFE}
F(\rho):=S\left(\rho\|\gamma\right)\equiv\Tr\left(\rho\log_2\rho\right)-\Tr\left(\rho\log_2\gamma\right).
\ee
\end{propm}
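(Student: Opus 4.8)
The plan is to prove the two inequalities $I_{\mathrm{th}}^\infty(\rho)\le F(\rho)$ and $I_{\mathrm{th}}^\infty(\rho)\ge F(\rho)$ separately, the first as a single\hyp shot converse and the second as an asymptotic achievability statement. For the converse I would first record the elementary identity that, for any full\hyp rank reference state $\gamma$ and any ensemble $\cC=\{(p_k,\sigma^{(k)})\}$ with average $\bar\sigma=\sum_k p_k\sigma^{(k)}$,
\be
\chi\left(\cC\right)=\sum_k p_k\,S\left(\sigma^{(k)}\|\gamma\right)-S\left(\bar\sigma\|\gamma\right),
\ee
obtained by expanding the relative entropies and using $\sum_k p_k\Tr(\sigma^{(k)}\log_2\gamma)=\Tr(\bar\sigma\log_2\gamma)$ together with the definition \eqref{defchi}. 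With $\gamma$ the Gibbs state this reads $\chi(\cC)=\sum_k p_k F(\sigma^{(k)})-F(\bar\sigma)$.

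Next I would invoke two standard properties of $F(\rho)=S(\rho\|\gamma)$: nonnegativity, and monotonicity under thermal operations (the data\hyp processing inequality, using that the global energy\hyp conserving unitary preserves the total Gibbs state and that appending a Gibbs ancilla leaves $F$ unchanged). Since every codeword satisfies $\sigma^{(k)}=\cT^{(k)}(\rho)$, monotonicity gives $F(\sigma^{(k)})\le F(\rho)$ and nonnegativity gives $F(\bar\sigma)\ge0$, so the identity yields the single\hyp shot bound $\chi(\cC)\le F(\rho)$ for all $\cC\in\sC(\rho)$, i.e.\ $I_{\mathrm{th}}(\rho)\le F(\rho)$. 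As $F$ is additive, $F(\rho^{\otimes m})=mF(\rho)$, whence $I_{\mathrm{th}}(\rho^{\otimes m})\le mF(\rho)$ and $\limsup_m I_{\mathrm{th}}(\rho^{\otimes m})/m\le F(\rho)$. (That the limit \eqref{ICiid} exists follows independently from superadditivity of $I_{\mathrm{th}}$---apply independent optimal codes to each factor and use additivity of the Holevo quantity---via Fekete's lemma.)

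For achievability I would construct codes attaining $F(\rho)$ by work distillation. The asymptotic result cited in the footnote (equivalently quantum Stein's lemma / the AEP for $S(\rho\|\gamma)$) guarantees that per copy $\rho^{\otimes m}$ can be converted by thermal operations into $n_m$ pure, energy\hyp degenerate ``work bits'' with $n_m/m\to F(\rho)$ and vanishing error; these bits may be carried on degenerate ancillas appended in their maximally mixed (hence free) Gibbs state, so the enlarged memory accommodates the rate even when $F(\rho)>\log_2\dim\bM$. Concretely, a thermal operation $\cD_m$ sends $\rho^{\otimes m}$ to a state $\omega_m$ lying within $\epsilon_m\to0$ in trace distance of $\ket{0}^{\otimes n_m}\otimes\tau$. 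For each $k\in\{0,1\}^{n_m}$ let $\cE^{(k)}$ permute the degenerate register so that $\ket{0}^{\otimes n_m}\mapsto\ket{k}$; being energy\hyp conserving it is a thermal operation, so $\cT^{(k)}=\cE^{(k)}\circ\cD_m$ is as well, and taking $K$ uniform ($p_k=2^{-n_m}$) defines a code $\cC_m\in\sC(\rho^{\otimes m})$.

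Finally I would evaluate the rate. For the idealized codewords $\proj{k}\otimes\tau$ the ensemble average is $(\eins/2^{n_m})\otimes\tau$ with mutually orthogonal, zero\hyp entropy codewords, so its Holevo information is exactly $n_m$. The true codewords $\sigma^{(k)}=\cE^{(k)}(\omega_m)$ differ from these by at most $\epsilon_m$ in trace distance (contractivity), so a Fannes\hyp Audenaert continuity estimate gives $\chi(\cC_m)\ge n_m-2[\epsilon_m\log_2 D_m+h(\epsilon_m)]$, with $h$ the binary entropy and $\log_2 D_m=O(m)$ the log\hyp dimension of the codeword space; dividing by $m$, the correction vanishes since $\epsilon_m\to0$, giving $\liminf_m I_{\mathrm{th}}(\rho^{\otimes m})/m\ge\liminf_m\chi(\cC_m)/m\ge F(\rho)$. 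Together with the converse this forces the limit to exist and equal $F(\rho)$. I expect this achievability half to be the main obstacle: the single\hyp copy TIC is generically strictly below $F$, so one genuinely needs the distillation/AEP input, and the delicate point is to guarantee that the error $\epsilon_m$ decays fast enough that the dimension\hyp dependent correction $\epsilon_m\log_2 D_m$ is $o(m)$---which holds as soon as $\epsilon_m\to0$, since $\log_2 D_m/m$ stays bounded.
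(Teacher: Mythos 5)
Your proof is correct, and it diverges from the paper's own argument in a substantive way, most of all in the converse. On achievability the two routes are siblings: the paper also invokes the asymptotic interconversion result of Brand\~ao et al.~\cite{SReth}, but it distills $\rho^{\otimes m}$ into pure products of energy eigenstates whose type matches the Gibbs distribution, and then encodes by arbitrary energy\hyp conserving unitaries inside the resulting degenerate eigenspace, extracting the rate $F(\rho)$ from the multinomial coefficient $\log_2\mu(\vect f)$; your variant distills instead to $n_m$ energy\hyp degenerate work bits on appended Gibbs (maximally mixed) ancillas and encodes by permutations of that register. These are the same idea with different choices of standard resource, and both share the same framework caveat (codewords living on an enlarged system including free Gibbs ancillas, unavoidable when $F(\rho)>\log_2\dim\bM$), which you flag explicitly; your Fannes\n Audenaert treatment of the vanishing conversion error is a genuine point of rigor that the paper silently glosses over. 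The real difference is the converse: the paper argues physically that if the rate exceeded $F(\rho)$, the correlations with the classical variable could be ``capitalized'' to regenerate more copies of $\rho$ than one started with, contradicting non\hyp creation of resource under thermal operations\m an appealing but heuristic second\hyp law argument, which moreover tacitly requires operations conditioned on the classical register that sit outside the strict thermal\hyp operations framework. You instead prove the stronger \emph{single\hyp shot} inequality $I_{\mathrm{th}}(\rho)\le F(\rho)$ from the exact identity $\chi(\cC)=\sum_k p_k S\left(\sigma^{(k)}\|\gamma\right)-S\left(\bar\sigma\|\gamma\right)$, Gibbs\hyp preservation of thermal operations, data processing, and nonnegativity of relative entropy, then lift it to the i.i.d.\ setting by additivity of $F$. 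This buys you two things the paper does not have: a fully rigorous converse, and the standalone fact that even a single copy's TIC never exceeds $F(\rho)$ (consistent with, and explaining, the scatter plots of Fig.~\ref{fig2d} lying below the diagonal); your Fekete\hyp lemma remark also settles existence of the limit in Eq.~\eqref{ICiid}, which the paper takes for granted.
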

We provide the proof in section~\ref{asym} of the Supplemental Material. Notably, the free energy emerges as the asymptotic TIC despite the readout's being unrestricted. This is because the optimal asymptotic code consists of pure eigenstates all equal in energy, and can therefore be read out by an energy\hyp conserving measurement. Our finding establishes that the TIC recovers  standard notions of free energy in the thermodynamic limit \cite{Sec,BG15}.

We now turn to the study of the TIC in the non\hyp i.i.d., or \emph{single\hyp shot}, regime. The science of general coherent thermal operations in this regime is nontrivial, but the special case of two\hyp level systems, or qubits, is relatively tractable.\vspace{2ex}
\begin{figure}[t]
    \includegraphics[width=\columnwidth]{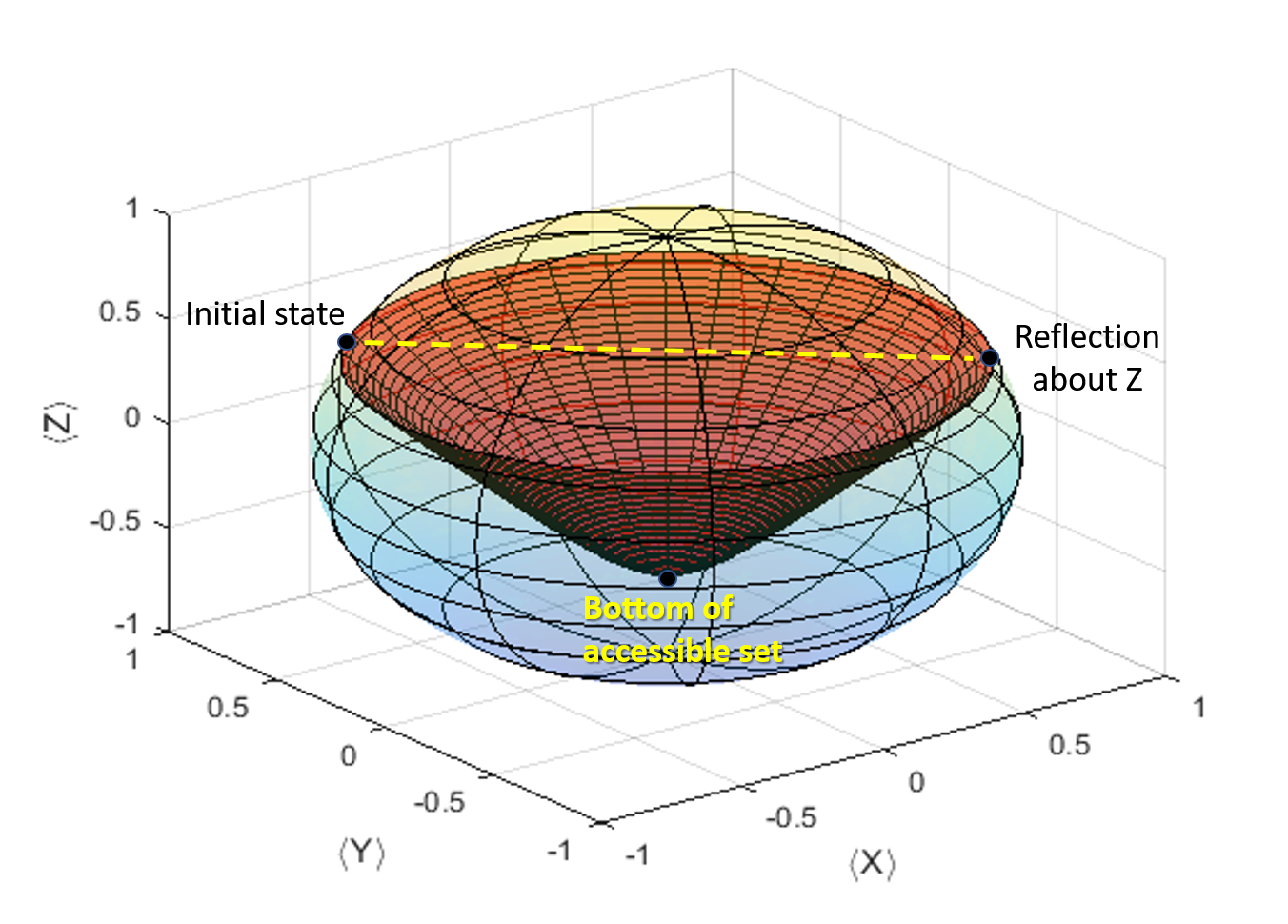}
    \caption{Bloch visualization of the set of states accessible by qubit thermal operations from a pure initial state; an informationally maximal code constructed from the accessible set comprises the three indicated extremal states as codewords.}\label{figacc}
\end{figure}

\begin{figure}[h]
  \centering
  \subfloat[$T=0$]{
    \includegraphics[width=.3\columnwidth]{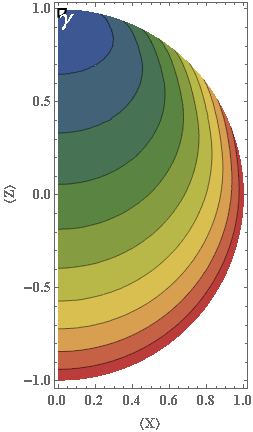}
    }
    \subfloat[$T=0.1\Delta E/k_{\mathrm B}$]{
    \includegraphics[width=.3\columnwidth]{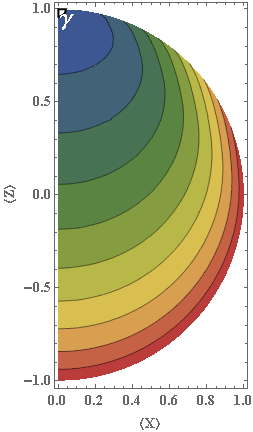}
    }
    \subfloat[$T=\Delta E/k_{\mathrm B}$]{
    \includegraphics[width=.3\columnwidth]{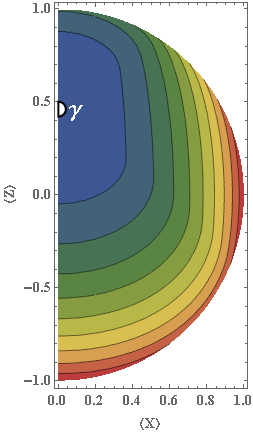}
    }\\
    \subfloat[$T=1.5\Delta E/k_{\mathrm B}$]{
    \includegraphics[width=.3\columnwidth]{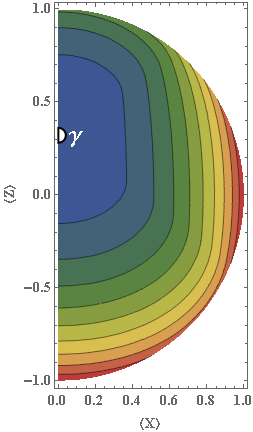}
    }
    \subfloat[$T=2\Delta E/k_{\mathrm B}$]{
    \includegraphics[width=.3\columnwidth]{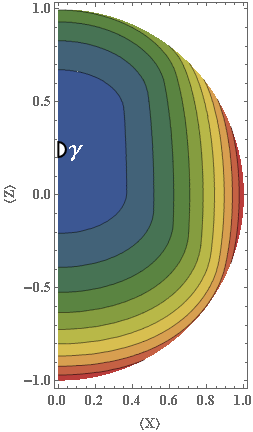}
    }
    \subfloat[$T\to\infty$\label{fig2d1}]{
    \includegraphics[width=.3\columnwidth]{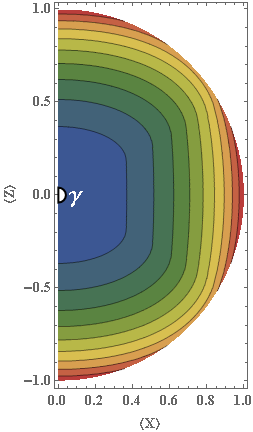}
    }\\
    \subfloat{
    \includegraphics[width=.5\columnwidth]{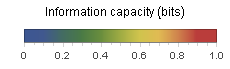}
    }
    \caption{Thermal information capacity (TIC) over different blank\hyp memory states in the $X^+Z$ section of the Bloch ball, for a qubit memory (with energy gap $\Delta E$) at various temperatures. The TIC of the Gibbs state $\gamma$ is zero, and is higher for states further away from $\gamma$. The zero\hyp temperature limit behaviour persists at temperatures as high as $0.1~\Delta E/k_\rB$; significant variation ensues in the $\cO\left(\Delta E/k_\rB\right)$ temperature range, while the high\hyp temperature limit resembles the information landscape of a non\n energy\hyp degenerate qubit memory.}\label{fig3d}
\end{figure}

\noindent\textbf{2\hyp level memory.} Consider a qubit memory $\bM$ governed by a (generally non\hyp degenerate) Hamiltonian $H_\bM=E_0\proj0+E_1\proj1$ and immersed in an ambient temperature $T$. Computing the TIC (Definition~\ref{defIC}) of a given initial state $\rho$ entails searching from the set $\sC(\rho)$ of codes accessible from $\rho$.
The concavity of the von Neumann entropy function implies that codes containing only extreme points of the accessible set will attain the optimum. This and other simplifications (detailed in section~\ref{appx} of the Supplemental Material) lead to our main result:
\begin{theorem}\label{thmain}
For a qubit memory $\bM$, an optimal code accessible thermally from an initial state $\rho$ is of the form
\be
\cC_q\equiv\left\{\left(\fr q2,\rho\right),\left(\fr q2,Z\rho Z\right),\left(1-q,\tilde\rho\right)\right\},
\ee
where $q\in[0,1]$, $Z=\proj0-\proj1$, and $\tilde\rho$ is the state at the tip of the accessible set (Fig.~\ref{figacc}). The thermal information capacity (TIC) of $\rho$ can then be determined by carrying out the single\hyp parameter optimization $I_{\mathrm{th}}(\rho)=\max\limits_{q\in[0,1]}\chi\left(\cC_q\right)$.
\end{theorem}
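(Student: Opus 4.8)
The plan is to strip the supremum in Definition~\ref{defIC} down to the one-parameter family $\cC_q$ in three moves: confine the codewords to extreme points of the accessible set $\cA(\rho)$ (the convex set of states reachable from $\rho$ by thermal operations, shown in Fig.~\ref{figacc}), symmetrize, and then bound how many distinct codewords are actually needed.

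First I would show an optimal code uses only extreme points of $\cA(\rho)$. Writing $\chi(\cC)=S(\bar\sigma)-\sum_k p_k S(\sigma^{(k)})$ with $\bar\sigma=\sum_k p_k\sigma^{(k)}$, suppose a codeword $\sigma^{(k)}$ is not extremal, say $\sigma^{(k)}=\sum_j\lambda_j e_j$ with the $e_j$ extreme. Replacing $(p_k,\sigma^{(k)})$ by $\{(p_k\lambda_j,e_j)\}_j$ leaves $\bar\sigma$, hence $S(\bar\sigma)$, unchanged, while concavity of $S$ gives $\sum_j\lambda_j S(e_j)\le S(\sigma^{(k)})$, so the subtracted term cannot grow and $\chi$ cannot decrease. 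Iterating realizes the optimum with extreme-point codewords only.

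Next I would use the reflection $Z(\cdot)Z$: since $Z$ is energy-conserving it is a free thermal operation, so $\cA(\rho)$ is invariant under it and $Z\rho Z$ is accessible. Given an optimal code, its reflection has the same $\chi$, and the equal mixture of the two keeps the codeword-entropy term fixed while moving the ensemble average to $\tfrac12(\bar\sigma+Z\bar\sigma Z)$, which is diagonal; concavity of $S$ then makes its $\chi$ at least as large. Hence there is an optimal $Z$-symmetric code with diagonal average. Since $\rho$'s coherence can be rotated real by a covariant phase and $\gamma$ is diagonal, $\cA(\rho)$ is also symmetric under complex conjugation, letting me take all codewords into the $X$--$Z$ section of Fig.~\ref{figacc}.

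The remaining---and hardest---step is to show three codewords suffice and to identify them as $\rho$, $Z\rho Z$, and the tip $\tilde\rho$. At fixed population (fixed diagonal), pushing a symmetric codeword pair's coherence to the maximum permitted by $\cA(\rho)$ leaves $\bar\sigma$ untouched---the pair's coherences cancel in the average and its populations are held fixed---while strictly lowering each $S(\sigma^{(k)})$, so optimal codewords lie on the maximal-coherence boundary, whose highest point is $\rho$ itself since thermal operations cannot create coherence. Parametrizing that boundary by population $z$ with entropy $s(z)$, $\chi$ becomes a concave functional of the codeword distribution depending on it only through the mean population $\bar z=\int z\,\rd\mu$ and mean entropy $\int s\,\rd\mu$; minimizing the latter at fixed $\bar z$ is a two-constraint linear program, so an optimum is supported on at most two populations, which I would argue are the endpoints of the arc---the level of $\rho$ and the tip $\tilde\rho$---yielding $\cC_q$ and reducing the TIC to $\max_{q\in[0,1]}\chi(\cC_q)$. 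I expect this identification to be the main obstacle: it needs an explicit description of $\cA(\rho)$ for coherent qubit thermal operations---its maximal-coherence boundary and the profile $s(z)$---together with a convexity argument placing the optimal two-point support at $\rho$ and $\tilde\rho$ rather than an interior pair. The extreme-point and symmetrization reductions are general; singling out $\rho$ and $\tilde\rho$ is where the detailed single-qubit geometry---the ``other simplifications'' of the Supplemental Material---must do the real work.
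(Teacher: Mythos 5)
You reproduce the paper's first two reductions exactly (section~\ref{appx} of the Supplemental Material): your extreme-point splitting is the paper's Observation, and your $Z$-reflection symmetrization\m equal mixture of a code with its reflected copy, codeword-entropy term unchanged, ensemble average made diagonal, concavity giving $\chi$ at least as large\m is precisely the paper's Lemma on paired codes, proved there by the same two ingredients (invariance of $S$ under the unitary $Z$ and concavity of $S$). Both routes therefore arrive at the same reduced problem: maximize $h(\bar s)-\xi(\bar s)$ over the mean population $\bar s\in[r,1-\lambda r]$, where $\xi(\bar s)$ is the minimum of $\sum_j p_j S(s_j)$ over distributions supported on the maximal-coherence arc with mean population $\bar s$.

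The genuine gap is the step you yourself flag, and it is not a small one. Your two-constraint linear-programming (Carath\'eodory) argument yields only that $\xi(\bar s)$ is attained on \emph{some} two-point support; it does not place that support at the endpoint populations $\{r,\,1-\lambda r\}$, which is what the specific form $\cC_q=\left\{\left(\fr q2,\rho\right),\left(\fr q2,Z\rho Z\right),\left(1-q,\tilde\rho\right)\right\}$ requires. Endpoint support for every $\bar s$ is equivalent to the entropy profile $S(s)$ along the extremal arc lying above the chord through its endpoint values; if $S(s)$ had a convex stretch, the optimizer could be an interior point or an interior pair, and the three-state form of the theorem would fail. The paper closes exactly this hole with Proposition~\ref{Scon}: $S(s)$ is concave on $[r,1-\lambda r]$. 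Its proof is the one piece of machinery your proposal lacks: write the qubit entropy as a function of the determinant, show that $S$ is strictly increasing and concave as a function of $D$ (Lemma~\ref{DS}), then use the explicit Horodecki--Oppenheim boundary $\kappa_s$ to compute $D(s)=s(1-s)-\kappa_s^2$ and check $D''(s)<0$; the chain rule then gives $S''(s)\le 0$, since an increasing concave function of a concave function is concave. Until you supply this concavity statement (or an equivalent property of the profile you call $s(z)$), the identification of the optimal codewords as $\rho$, $Z\rho Z$ and the tip $\tilde\rho$ remains unproven.
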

This optimization can be easily carried out numerically. Figure~\ref{fig3d} depicts the result: the TIC as a function of the initial state $\rho$, at various temperatures measured in relation to $\Delta E\equiv E_1-E_0$. The TIC understandably vanishes when $\rho$ equals the Gibbs state $\gamma$, and increases with \emph{athermality}, i.e.\ the departure of $\rho$ from this state. The Helmholtz free energy $F(\rho)$ [Eq.~\eqref{GFE}] is an operationally meaningful measure of athermality, and so we investigate the behaviour of $I_{\mathrm{th}}(\rho)$ in relation to $F(\rho)$ (Fig.~\ref{fig2d}). We see that the two resources vary similarly with $\rho$, but less so at lower temperatures. In section~\ref{appscat} of the Supplemental Material, we examine the TIC in relation with other resourcefulness measures, namely the purity and the relative entropy of coherence; we find the free energy to be better than these other resources as an indicator of the TIC. This is understandable, given the asymptotic convergence of the TIC to the free energy (Proposition~\ref{piid}).\vspace{2ex}

\begin{figure}[t]
    \subfloat[$T=0.1~\Delta E/k_{\mathrm B}$]{
    \includegraphics[width=.45\columnwidth]{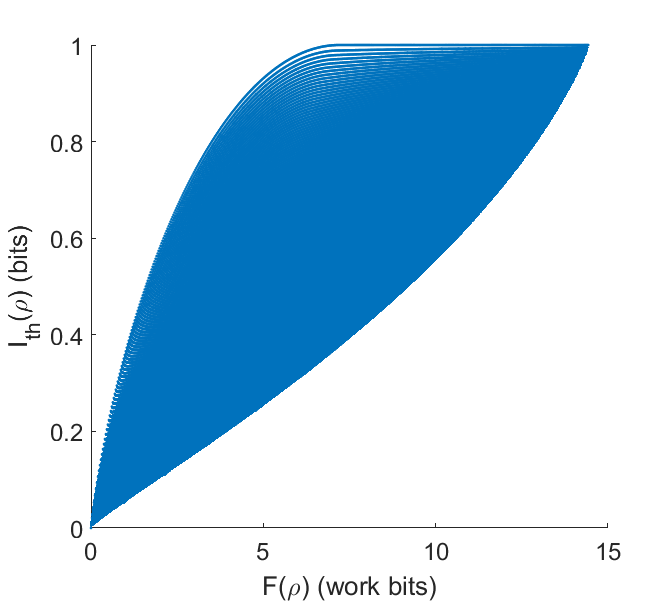}
    }
    \subfloat[$T=\Delta E/k_{\mathrm B}$]{
    \includegraphics[width=.45\columnwidth]{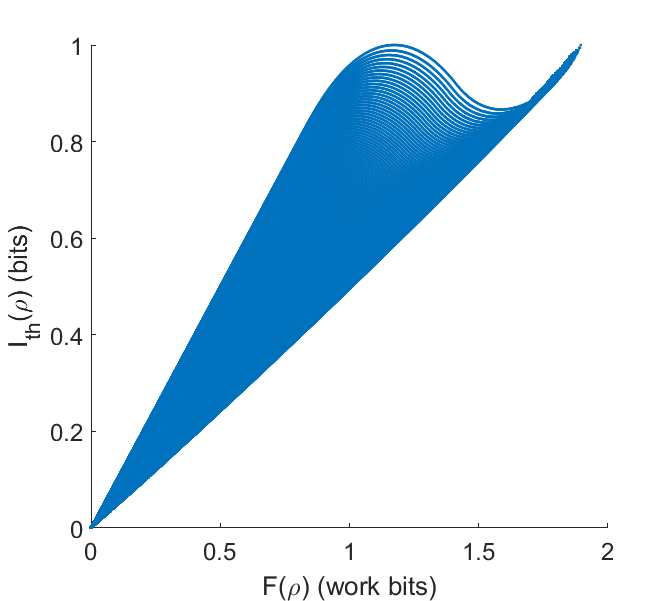}
    }\\
    \subfloat[$T=2~\Delta E/k_{\mathrm B}$]{
    \includegraphics[width=.45\columnwidth]{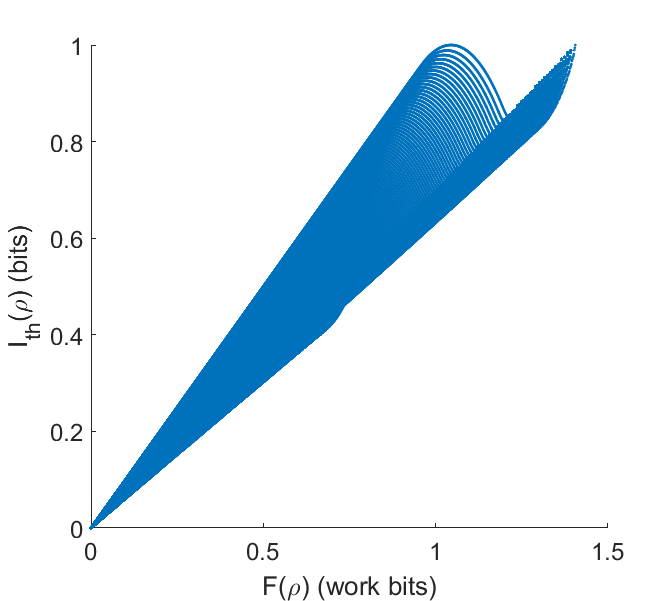}
    }
    \subfloat[$T\to\infty$]{
    \includegraphics[width=.45\columnwidth]{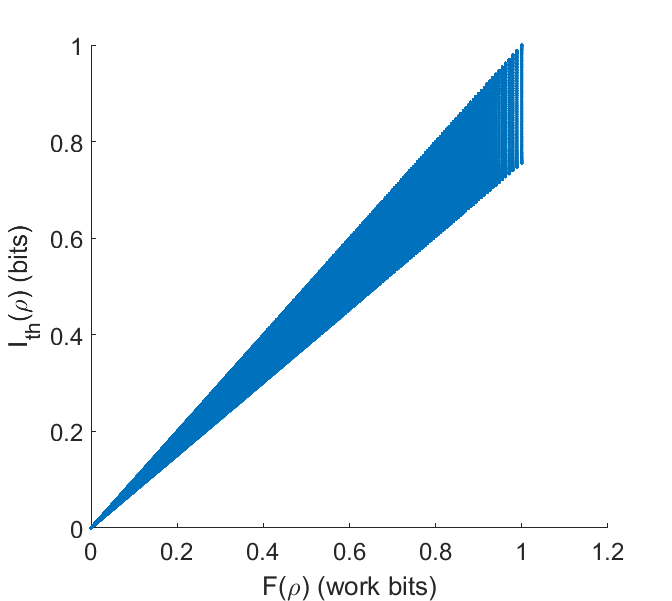}
    }
    \caption{Scatter plots of the thermal information capacity vs.\ non\hyp equilibrium Helmholtz free energy of qubit memory states: while the two resources are correlated in their state\hyp dependence, they are distinct, particularly at lower temperatures. In each plot, the top\hyp right point of maximum capacity corresponds to the initial state $\rho=\proj1$, the pure excited state. The maxima occurring to the left of this point correspond to initial states along the equator, e.g.\ $\rho=\proj+$. In the $T\to\infty$ limit, the two maximal regions get more and more similar in their free energy, as the latter converges to the purity (or ``negentropy'') of $\rho$.}\label{fig2d}
\end{figure}

\noindent\textbf{Towards implementation.} The thermal operations framework, which we have used to model the encoding process, is agnostic about the existence of a practically feasible auxiliary system $\bA$ and coupling to realize a desired thermal operation (see \cite{NYH17} for a detailed discussion). Thus, we would like to go beyond the abstraction of thermal operations and construct a concrete realization. To this end, we now probe an interaction of the qubit memory $\bM$ with a bosonic mode bath tuned to $\bM$'s energy gap, interacting with the latter via a Jaynes\n Cummings coupling.

We refer again to Fig.~\ref{figacc} showing the three states constituting an optimal code obtainable from a given initial state. The initial state itself being one of these, another results from reflecting the initial state about the Pauli Z axis, while the third lies at the tip of the convex cone of accessible states. Reflection about Z is represented by the unitary transformation $Z$, which can be effected simply by evolving the memory system under its free Hamiltonian for a suitable length of time. Transforming to the third codeword state, however, requires population inversion relative to the initial state, which cannot be achieved perfectly by a Jaynes\n Cummings coupling owing to asynchronicity between the Rabi oscillations within different memory\n bath energy levels. Nevertheless, we found that the optimal capacity can be approximated arbitrarily well, albeit at the cost of longer running time (Fig.~\ref{figimp}): this mirrors the power\n efficiency tradeoff in the performance of heat engines. The phase transition\n like jumps occur due to the above\hyp mentioned Rabi oscillations whose collective effect on the qubit's marginal state is irregular in time. The degree of population inversion required to meet a given efficiency is generally achieved at similar times over short ranges of temperature, but at certain critical temperatures where it just begins to fail, the irregular time\hyp dependence of the population inversion leads to a long period of oscillations where this failure persists, until a sufficient inversion level is finally reached around a different time regime. This new inversion level again remains sufficient to meet the required efficiency, until the next critical temperature is hit, and so on. The \emph{downward} dip of some of the curves with increasing temperature seems counterintuitive. We conjecture that this is a consequence of the fall in optimal capacity with increasing temperature, thus rendering it easier to approach. Technical details about these results are provided in section~\ref{appimp} of the Supplemental Material.\vspace{2ex}

\noindent\textbf{Discussion.} We probed the thermodynamical limitations of the capacity of a quantum system to store information. We defined a \emph{thermally passive quantum memory} as one which is written onto without access to free energy sources, and \emph{thermal information capacity} as the capacity of such a memory. After determining that the thermal information capacity approaches the non\hyp equilibrium free energy in the thermodynamic limit, we computed it away from the thermodynamic limit for a single\hyp qubit memory, showing it to be distinct from known free energies. We then described a proposal for approximating the optimal encoding strategy through a Jaynes\n Cummings interaction of the memory with a Bosonic bath.

\begin{figure}[t!]
    \includegraphics[width=\columnwidth]{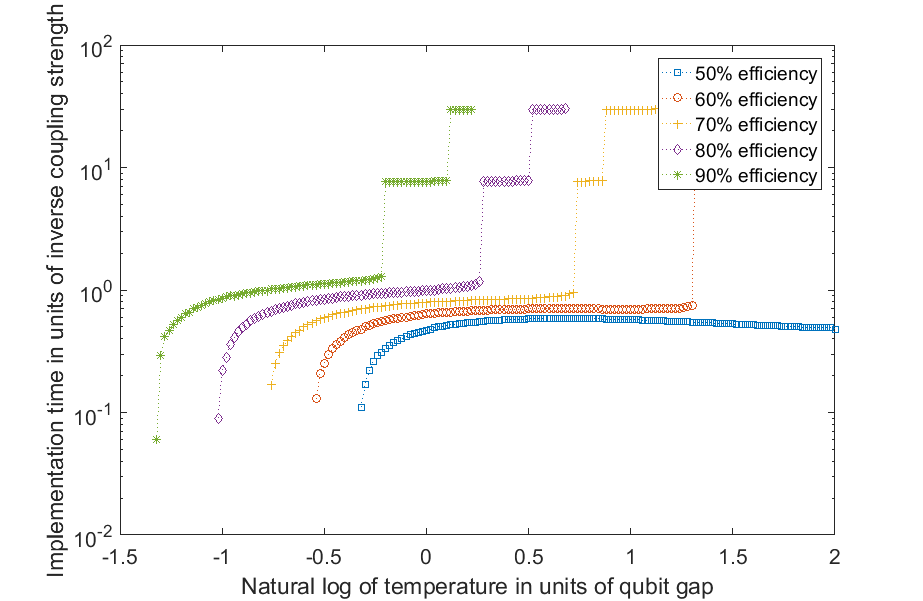}
    \caption{Time taken by a Jaynes\n Cummings coupling to approximate the optimal qubit thermal information capacity to various efficiencies, vs.\ bath temperature. The speed\hyp efficiency tradeoff is reminiscent of a heat engine's performance.}\label{figimp}
\end{figure}

The connection between information processing and thermodynamics in general quantum settings has many facets. Among these, one that has stimulated significant interest is understanding the role of quantum effects, such as coherence, in thermodynamic contexts~\cite{CST,Coh,NG15,catcoh,KLOJ15,KA16,SU09,nege,PKSK13,BMC16,FGPP17}. Our work provides a new perspective, in investigating how quantum coherence improves our capacity to store information without additional free energy. Furthermore, passive information storage has operational relevance in sensing applications where the means through which a probe encodes environmental information is implicitly energy\hyp conserving~\cite{Tan08,Bradshaw17,QMetro1,QMetro2}. The thermal information capacity then provides an ultimate upper bound in how much information such a probe can store.
%

Another natural question that follows from thermally passive encoding is: How can we use the encoded information in a way that is also subject to thermodynamic constraints? The primary challenge here is that the direct way to read out encoded information\m measurement\m lies outside the thermal operations framework, as measurement\hyp induced collapse can drive systems out of thermal equilibrium. Hence, we need to explore more sophisticated frameworks, such as thermally passive coupling between the memory and the system it is storing information about. Developments in such directions could enable a full description of generalized quantum Szilard engines that take full advantage of initial memory states that can exist in quantum superpositions of non\hyp degenerate energy eigenstates.

\vspace{2ex}

\noindent\textbf{Acknowledgments.} The authors thank Syed Assad, Diane Donovan, Ping Koy Lam, and Bevan Thompson for helpful discussions. The work is supported by the National Research Foundation of Singapore (NRF Fellowship Reference No.\ NRF NRFF2016-02),  the John Templeton Foundation (Grant No.\ 54914), the FQXi Large Grant ``Observer\hyp Dependent Complexity: The Quantum\n Classical Divergence over `What is Complex?{'}'', the National Research Foundation and L'Agence Nationale de la Recherche joint project NRF2017-NRFANR004 VanQuTe and the Singapore Ministry of Education Tier 1 RG190/17. GG acknowledges support from the Natural Sciences and Engineering Research Council of Canada.

\clearpage
\onecolumngrid

\begin{center}
\textbf{\large Supplemental Material}
\end{center}



\renewcommand{\thepage}{S\arabic{page}}
\renewcommand\thesection{S\arabic{section}}
\setcounter{page}{1}
\setcounter{section}{0}
\makeatletter
\numberwithin{equation}{section}
\renewcommand{\theequation}{S\arabic{section}.\arabic{equation}}
\addtocounter{ifsup}{1}
\numberwithin{figure}{ifsup}
\renewcommand{\thefigure}{S\arabic{figure}}
\setcounter{figure}{0}

\section{TIC in the thermodynamic limit}\label{asym}
We are interested in determining
\be\label{AICiid}
I_{\mathrm{th}}^\infty\left(\rho\right):=\lim_{m\to\infty}\fr{I_{\mathrm{th}}\left(\rho^{\otimes m}\right)}m,
\ee
under the class of thermal operations (TO) on infinitely many copies of a $d$\hyp dimensional elementary system with Hamiltonian $H$, with the associated Gibbs state $\gamma$. For convenience, we assume $H$ has no degeneracy; our arguments can be easily generalized to degenerate cases.

The result of \cite{SReth} states that, given two resources $\rho$ and $\sigma$, the conversion $\rho^{\otimes m}\otimes\gamma^{\otimes n_m}\mapsto\sigma^{\otimes n_m}\otimes\gamma^{\otimes m}$ in the limit $m\to\infty$ is possible under TO (allowing a conversion error that vanishes in the limit) at the optimal rate
\be\label{iidrate}
\limsup_{m\to\infty}\fr{n_m}{m}=\fr{F\left(\rho\right)}{F\left(\sigma\right)},
\ee
where $F(\rho):=S(\rho\|\gamma)=\Tr\left(\rho\log_2\rho\right)-\Tr\left(\rho\log_2\gamma\right)$.

We first convert the given $m$ copies of the general resource $\rho$ to some standard resources with the same amount of free energy; the asymptotic reversibility mentioned above ensures that the TIC of these standard resources\m which happens to be easier to calculate\m is equal to that of the general ones.

The standard resources of our choice are pure states of the form
\be
\Psi\left(\vect j\right)\equiv\bigotimes_{k=0}^{n-1}\proj{E_{j_k}},
\ee
where $\vect j$ is a collection of (an as\hyp yet\hyp unspecified number) $n$ indices, each chosen from $\{0,1\dots,d-1\}$. The energy of this state is given by
\be
E(\vect j)=\sum_{k=0}^{n-1}E_{j_k}.
\ee
The number $n$ is expected to be very large, while the possible values for each $j_k$ number $d$. Therefore, $\vect j$ will typically have repeating indices. Define the vector of frequencies, $\vect f$, by
\be
f_j:=\left|\left\{k\in\{0,1\dots,n-1\}|j_k=j\right\}\right|.
\ee
Then, the rank of the degenerate subspace of energy $E(\vect j)$ is given by the multinomial coefficient
\be
\mu(\vect f):=\left(\begin{array}{c}
n\\
f_0,f_1\dots,f_{d-1}
\end{array}\right).
\ee
Arbitrary unitaries within this subspace are energy\hyp conserving, and therefore TO. Thus, starting from $\Psi(\vect j)$ (or any other pure state in this subspace), we can use TO to construct an ensemble of $\mu(\vect f)$ equally\hyp probable orthonormal pure states, which achieves a Holevo rate of $\log_2\mu(\vect f)$.

We now draw inspiration from the theory of asymptotic equipartition to determine the best choice of $\vect f$. As an Ansatz, let us fix $n$ and set $f_j=g_jn$. Using Eq.~(\ref{iidrate}), we find the number of initial copies of $\rho$ required for constructing $\Psi(\vect j)$:
\be
m=\fr{\sum_jf_jS\left(\proj{E_j}\|\gamma\right)}{S\left(\rho\|\gamma\right)}=n\sum_j\fr{g_j\log_2\left(Zg_j^{-1}\right)}{S\left(\rho\|\gamma\right)},
\ee
where $Z=\sum_jg_j$ is the single\hyp system partition function. Using this construction, we can lower\hyp bound the asymptotic TIC rate defined in Eq.~\eqref{AICiid}:
\be\label{i.i.d.bound}
I_{\mathrm{th}}^\infty\left(\rho\right)\ge\lim_{m\to\infty}\fr{\log_2\mu(\vect f)}m=S\left(\rho\|\gamma\right)\equiv F(\rho).
\ee
To see that this is also an upper bound, we note the following. The final memory state contains correlations with the classical variable, which is itself a thermodynamic resource that can be capitalized to recover copies of the original resource $\rho$ at precisely the rate $F(\rho)$. If this were not also an upper bound, more of the initial resource could be reconstructed than we began with, thereby leading to a net creation of resource under TO. Since this is forbidden, the bound works both ways, establishing Proposition~\ref{piid} of the main text.

\section{Technical results for qubit TIC}\label{appx}
Here we provide the technical results used in proving Theorem~\ref{thmain} about TIC under qubit TO. We adopt a convenient shorthand, denoting a general state of the qubit memory $\bM$ by
\be
\eta[r,\alpha]:=\left(\begin{array}{lr}
r&\alpha\\
\alpha^*&1-r
\end{array}\right),
\ee
where the matrix representation is relative to the energy basis $\left\{\ket0,\ket1\right\}$. The Gibbs state is given by $\gamma=\eta[g,0]$, where $g=\exp(-\beta E_0)/\left[\exp(-\beta E_0)+\exp(-\beta E_1)\right]$. We also define
\be\label{deflamb}
\lambda:=\fr{1-g}g=\exp\left[\beta(E_0-E_1)\right].
\ee
Our aim is to compute the TIC, defined as
\be\label{ATOIC}
I_{\mathrm{th}}\left(\rho\right)=\sup_{\cC\in\sC(\rho)}\chi(\cC),
\ee
where $\sC(\rho)$ is the set of all codes constructed from codewords contained in the set [call it $\sth(\rho)$] of states accessible by TO from the initial state $\rho$. The results of Ref.~\cite{SNan} imply that
\be
\sth(\rho)=\left\{\eta[s,\beta]:s\in\left[r,1-\lambda r\right]
\left|\beta\right|\le\kappa_s
\right\},
\ee
with
\be
\kappa_s:=\left|\alpha\right|\fr{\sqrt{\left[\lambda s+r-1\right]\left[\lambda r+s-1\right]}}{\left|(\lambda+1)r-1\right|}.
\ee
\begin{obs}
The optimization in Eq.~(\ref{ATOIC}) can be restricted to codes $\cC$ containing only the extreme points $\eta[s,\kappa_se^{i\phi}]$ of $\sth(\rho)$. For, if some code $\tilde\cC$ contains the codeword $\left(p,q\sigma_1+[1-q]\sigma_2\right)$ with $p>0$; $0<q<1$; and $\sigma_1\ne\sigma_2$ both in $\sth(\rho)$, we can construct another code $\cC$, identical to $\tilde\cC$ except with this codeword replaced by two others, namely $\left(pq,\sigma_1\right)$ and $\left(p[1-q],\sigma_2\right)$. By the concavity of the von Neumann entropy, $\chi(\cC)>\chi\left(\tilde\cC\right)$.
\end{obs}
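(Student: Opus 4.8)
The plan is to reduce the optimization to extreme-point codes by exploiting the additive structure of the Holevo information together with the strict concavity of the von Neumann entropy. I would begin by writing $\chi(\cC)=S(\bar\sigma)-\sum_k p_k S(\sigma^{(k)})$, where $\bar\sigma:=\sum_k p_k\sigma^{(k)}$ is the ensemble average, and by taking from the explicit description of $\sth(\rho)$ recalled above that this set is compact and convex, with its extreme points being the maximal-coherence boundary states $\eta[s,\kappa_s e^{i\phi}]$. The two entropic terms behave very differently under a refinement of the code: the first depends only on the barycenter $\bar\sigma$, while the second is a probability-weighted average of individual codeword entropies.

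Next I would take any code $\tilde\cC\in\sC(\rho)$ that contains a codeword $(p,\sigma)$ with $\sigma$ \emph{not} an extreme point of $\sth(\rho)$. By the definition of an extreme point there exist distinct $\sigma_1,\sigma_2\in\sth(\rho)$ and $q\in(0,1)$ with $\sigma=q\sigma_1+(1-q)\sigma_2$, and I would replace $(p,\sigma)$ by the two codewords $(pq,\sigma_1)$ and $(p[1-q],\sigma_2)$ to obtain a new code $\cC\in\sC(\rho)$. Because $pq\,\sigma_1+p[1-q]\,\sigma_2=p\,\sigma$, the refinement leaves $\bar\sigma$—and hence $S(\bar\sigma)$—unchanged, so only the subtracted term moves. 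Strict concavity of $S$ then gives $S(\sigma)>qS(\sigma_1)+(1-q)S(\sigma_2)$, so the subtracted sum strictly decreases and $\chi(\cC)>\chi(\tilde\cC)$.

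To turn this single improvement step into the stated restriction, I would note that in the low-dimensional Bloch ball every point of the compact convex set $\sth(\rho)$ is a convex combination of its extreme points, so applying the decomposition to each of the finitely many non-extreme codewords of any code yields an extreme-point code of no smaller Holevo information; combined with the strict inequality, no code carrying a non-extreme codeword can be optimal. Hence the supremum in Eq.~(\ref{ATOIC}) is unaffected by restricting the codewords to the extreme points $\eta[s,\kappa_s e^{i\phi}]$, and it remains only to optimize over ensembles supported on this boundary family.

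The concavity estimate is the easy part; the step I expect to require the most care is the geometric input, namely justifying that $\sth(\rho)$ is genuinely convex and closed and that its extreme points are exactly the maximal-coherence states $\eta[s,\kappa_s e^{i\phi}]$. Both follow from the explicit envelope $\kappa_s$: one checks that $\kappa_s$ is concave in $s$ over the allowed population interval, which makes the region $\{|\beta|\le\kappa_s\}$ convex and forces its extreme points to lie on the boundary curve $|\beta|=\kappa_s$. Getting this boundary characterization right—rather than, for instance, accidentally excluding the tip state $\eta[1-\lambda r,0]$, where $\kappa_s$ vanishes—is what makes the otherwise-routine argument watertight.
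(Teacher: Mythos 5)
Your proposal is correct and follows essentially the same route as the paper: decompose any non\hyp extreme codeword $\left(p,q\sigma_1+[1-q]\sigma_2\right)$ into $\left(pq,\sigma_1\right)$ and $\left(p[1-q],\sigma_2\right)$, observe that the ensemble average (hence the first entropy term) is unchanged, and invoke strict concavity of the von Neumann entropy to conclude $\chi(\cC)>\chi\left(\tilde\cC\right)$. Your added care about the convexity and compactness of $\sth(\rho)$ and the characterization of its extreme points is sound supplementary rigor, but it does not change the substance of the argument.
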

\begin{alem}
The optimization in Eq.~(\ref{ATOIC}) can be further restricted, to codes consisting solely of pairs $\left(\eta[s,\kappa_se^{i\phi}],\eta[s,-\kappa_se^{i\phi}]\right)$ of extremal states lying on opposite sides of the Z axis in the Bloch representation, with both states in a pair of a given $s$ occurring with equal probability. Explicitly, such a code takes the form
\be\label{paircode}
\cC=\left\{\left(\fr{p_j}2,\eta\left[s_j,\pm\kappa_{s_j}e^{i\phi_j}\right]\right)\right\}.
\ee
\end{alem}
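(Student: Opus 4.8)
The plan is to exploit the reflection symmetry of the accessible set $\sth(\rho)$ about the Pauli-$Z$ axis. First I would note that the unitary $Z=\proj0-\proj1$ commutes with the memory Hamiltonian $H_\bM=E_0\proj0+E_1\proj1$ and is therefore energy-conserving, so conjugation by $Z$ is itself a thermal operation. A one-line computation gives $Z\eta[s,\alpha]Z=\eta[s,-\alpha]$, i.e.\ $Z$ merely flips the sign of the off-diagonal element (a $\pi$-rotation about the Bloch $Z$-axis). Since the description of $\sth(\rho)$ constrains only $s\in[r,1-\lambda r]$ and the modulus $|\beta|\le\kappa_s$ (with $\kappa_s$ depending on $\alpha$ only through $|\alpha|$), the accessible set is invariant under this reflection: whenever $\sigma\in\sth(\rho)$ we also have $Z\sigma Z\in\sth(\rho)$, and the reflection sends an extreme point $\eta[s,\kappa_se^{i\phi}]$ to the extreme point $\eta[s,-\kappa_se^{i\phi}]$ of the same $s$.

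Next I would symmetrize. Starting from a code $\tilde\cC=\left\{\left(p_k,\sigma^{(k)}\right)\right\}$ whose codewords are already extreme points $\sigma^{(k)}=\eta[s_k,\kappa_{s_k}e^{i\phi_k}]$ (legitimate by the preceding Observation), I would form the $Z$-reflected code $\left\{\left(p_k,Z\sigma^{(k)}Z\right)\right\}$, which by the symmetry above lies in $\sC(\rho)$, and then construct the equal mixture $\cC$ that assigns probability $p_k/2$ to each of $\sigma^{(k)}$ and its reflection $Z\sigma^{(k)}Z=\eta[s_k,-\kappa_{s_k}e^{i\phi_k}]$. Relabeling the index then places $\cC$ in exactly the paired form asserted in Eq.~\eqref{paircode}, with the equal-weight pairing arising automatically from the construction rather than being an extra assumption.

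It remains to show that symmetrization cannot decrease the Holevo information. The conditional-entropy contribution is manifestly unchanged, since $S\left(Z\sigma^{(k)}Z\right)=S\left(\sigma^{(k)}\right)$ by unitary invariance of the von Neumann entropy, so $\sum_k\fr{p_k}2\left[S\left(\sigma^{(k)}\right)+S\left(Z\sigma^{(k)}Z\right)\right]=\sum_kp_kS\left(\sigma^{(k)}\right)$ matches the term for $\tilde\cC$. For the ensemble-average term, writing $\bar\sigma=\sum_kp_k\sigma^{(k)}$, the average state of $\cC$ is $\fr12\left(\bar\sigma+Z\bar\sigma Z\right)$, and concavity of the von Neumann entropy gives $S\left(\fr12\left[\bar\sigma+Z\bar\sigma Z\right]\right)\ge\fr12S\left(\bar\sigma\right)+\fr12S\left(Z\bar\sigma Z\right)=S\left(\bar\sigma\right)$. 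Hence $\chi(\cC)\ge\chi\left(\tilde\cC\right)$, so an optimal code may always be taken in the symmetric paired form.

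I do not expect a serious obstacle here, as this is a standard symmetrization argument; the only points requiring genuine care are verifying that $\sth(\rho)$ is truly $Z$-symmetric (so that the reflected codewords remain accessible and extremal) and ensuring the concavity inequality is applied to the \emph{ensemble-average} state $\bar\sigma$ rather than to individual codewords. Both reduce to the observation that $\kappa_s$ is a function of $|\alpha|$ alone.
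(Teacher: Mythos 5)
Your proposal is correct and follows essentially the same route as the paper's own proof: symmetrize an extremal code under the $Z$-reflection, note the conditional-entropy term is unchanged by unitary invariance, and apply concavity of the von Neumann entropy to the ensemble-average state $\fr12\left(\bar\sigma+Z\bar\sigma Z\right)$ together with $S\left(Z\bar\sigma Z\right)=S\left(\bar\sigma\right)$. Your explicit check that $\sth(\rho)$ is $Z$-symmetric (via $Z$ being energy-conserving and $\kappa_s$ depending only on $|\alpha|$) is a point the paper leaves implicit, but it is not a departure in method.
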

\begin{proof}
For some code $\tilde\cC=\left\{\left(p_j,\eta\left[s_j,\beta_j\right]\right)\right\}$ constructed from extreme points of $\sth_\rS(\rho)$, the Holevo quantity is given by
\be
\chi\left(\tilde\cC\right)=S\left(\sum_jp_j\eta\left[s_j,\beta_j\right]\right)-\sum_jp_jS\left(\eta\left[s_j,\beta_j\right]\right).
\ee
Now, we will show that the Holevo rate of the corresponding paired code $\cC$, as in Eq.~\eqref{paircode}, is no smaller than $\chi\left(\tilde\cC\right)$.
\begin{align}
\chi(\cC)
&=S\left(\sum_j\fr{p_j}2\left(\eta\left[s_j,\beta_j\right]+\eta\left[s_j,-\beta_j\right]\right)\right)-\sum_j\fr{p_j}2\left[S\left(\eta\left[s_j,\beta_j\right]\right)+S\left(\eta\left[s_j,-\beta_j\right]\right)\right]\nonumber\\
&=S\left(\sum_j\fr{p_j}2\left(\eta\left[s_j,\beta_j\right]+\eta\left[s_j,-\beta_j\right]\right)\right)-\sum_jp_jS\left(\eta\left[s_j,\beta_j\right]\right)\nonumber\\
&\ge \fr12\left[S\left(\sum_jp_j\eta\left[s_j,\beta_j\right]\right)+S\left(\sum_jp_j\eta\left[s_j,-\beta_j\right]\right)\right]-\sum_jp_jS\left(\eta\left[s_j,\beta_j\right]\right)\nonumber\\
&=S\left(\sum_jp_j\eta\left[s_j,\beta_j\right]\right)-\sum_jp_jS\left(\eta\left[s_j,\beta_j\right]\right)=\chi\left(\tilde\cC\right).
\end{align}
The second line follows from the unitary relation (namely, through the unitary $Z$) between the states within each pair, the third from the concavity of the von Neumann entropy, and the fourth from the existence of a common unitary (again, $Z$) connecting corresponding codewords in the two half\hyp codes.

Together with the previous observation, this implies that an extremal code of the paired form \eqref{paircode} will attain the optimum TIC.
\end{proof}
We now note that $S\left(\eta[s,\kappa_se^{i\phi}]\right)$ is independent of $\phi$; it is effectively a function of $s$, which we denote $S(s)$. The Holevo information of a paired code such as in Eq.~(\ref{paircode}) is given by
\begin{align}
\chi(\cC)&=S\left[\sum_jp_j\left(\fr{\eta[s_j,\kappa_{s_j}e^{i\phi_j}]+\eta[s_j,-\kappa_{s_j}e^{i\phi_j}]}2\right)\right]-\sum_jp_j\left[\fr{S\left(\eta[s_j,\kappa_{s_j}e^{i\phi_j}]\right)+S\left(\eta[s_j,-\kappa_{s_j}e^{i\phi_j}]\right)}2\right]\nonumber\\
&=S\left[\sum_jp_j\eta[s_j,0]\right]-\sum_jp_j\left[\fr{S\left(\eta[s_j,\kappa_{s_j}e^{i\phi_j}]\right)+S\left(\eta[s_j,-\kappa_{s_j}e^{i\phi_j}]\right)}2\right]\nonumber\\
&=h\left(\sum_jp_js_j\right)-\sum_jp_jS(s_j)=:h\left(\bar s\right)-\sum_jp_jS(s_j),
\end{align}
where we recall that $h(\cdot)$ denotes the binary entropy function. We can now state the optimization in Eq.~(\ref{ATOIC}) as
\be
I_\mathrm{th}\left(\rho\right)=\max_{\bar s\in[r,1-\lambda r]}\left[h(\bar s)-\xi(\bar s)\right],
\ee
where
\be\label{defxi}
\xi(\bar s):=\min_{(\vect p,\vect s)|\sum_jp_js_j=\bar s}\sum_jp_jS(s_j).
\ee
Here it is to be understood that $\vect p$ is a probability distribution and that the $s_j$ are constrained to lie in $[r,1-\lambda r]$.
\begin{prop}\label{Scon}
$S(s)$ is concave for $s\in[r,1-\lambda r]$.
\end{prop}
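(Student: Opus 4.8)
The plan is to reduce the claim to a one\hyp dimensional convexity fact about the Bloch radius. First I would record the eigenvalues of the extremal state: a direct diagonalization gives $\eta[s,\kappa_s e^{i\phi}]$ the eigenvalues $\tfrac12\left(1\pm R(s)\right)$, where $R(s):=\sqrt{(2s-1)^2+4\kappa_s^2}$ is the length of the Bloch vector. This is manifestly independent of $\phi$ (consistent with the remark immediately preceding the proposition), so $S(s)=h\!\left(\tfrac{1+R(s)}{2}\right)=:g\left(R(s)\right)$, with $h$ the binary entropy. The whole problem is thus to show that $g\circ R$ is concave on $[r,1-\lambda r]$.

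Next I would dispose of the outer function. Differentiating $g(R)=h\!\left(\tfrac{1+R}{2}\right)$ gives $g'(R)=\tfrac12\log_2\tfrac{1-R}{1+R}\le0$ and $g''(R)=-\left[(\ln2)(1-R^2)\right]^{-1}<0$ on $R\in[0,1)$, so $g$ is \emph{decreasing and strictly concave}. By the chain rule, $S''(s)=g''(R)\,(R')^2+g'(R)\,R''$. The first term is $\le0$ automatically, and the second is $\le0$ provided $R''\ge0$. Hence it suffices to prove that $R(s)$ is convex on the interval.

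To analyze $R$, set $Q(s):=R(s)^2=(2s-1)^2+4\kappa_s^2$. Inserting the explicit $\kappa_s$ shows $Q$ is a quadratic $As^2+Bs+D$ with positive leading coefficient $A=4\left(1+\lambda|\alpha|^2\big/[(\lambda+1)r-1]^2\right)>0$. For the square root of an upward quadratic one computes $R''=\left(4AD-B^2\right)\big/\left(4Q^{3/2}\right)$, whose numerator is \emph{constant} in $s$ and equals $-\operatorname{disc}(Q)$. Thus $R$ is convex on the interval (where $Q=R^2\ge0$) if and only if $\operatorname{disc}(Q)=B^2-4AD\le0$, equivalently if and only if $Q(s)\ge0$ for all real $s$.

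The crux, and the step I expect to be the main obstacle, is establishing this discriminant inequality. I would expand $B^2-4AD$, pull out the non\hyp negative factor $|\alpha|^2/[(\lambda+1)r-1]^2$, and then bound the coherence using positivity of $\rho$, namely $|\alpha|^2\le r(1-r)$. After a short but somewhat involved computation the condition collapses to the quadratic inequality in $r$,
\[
\Phi(r):=(\lambda+1)^2r^2+(\lambda-3)(\lambda+1)\,r+(2-\lambda)\ge0 ,
\]
which factors as $\Phi(r)=(\lambda+1)^2\!\left(r-\tfrac{1}{\lambda+1}\right)\!\left(r-\tfrac{2-\lambda}{\lambda+1}\right)$. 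In the relevant regime (with $\ket0$ the ground state, so $\lambda\le1$) the smaller root is exactly $\tfrac{1}{\lambda+1}=g$, the Gibbs population; and validity of the accessible\hyp set description $\sth(\rho)$ requires the interval $[r,1-\lambda r]$ to be non\hyp empty, i.e.\ $(\lambda+1)r\le1$, i.e.\ $r\le g$. This places $r$ at or below the smaller root, whence $\Phi(r)\ge0$. The decisive insight is therefore that the physical constraint $r\le g$ is precisely what certifies $\operatorname{disc}(Q)\le0$; with that, $R$ is convex, and combining with the decreasing concavity of $g$ gives $S''(s)\le0$, proving the proposition.
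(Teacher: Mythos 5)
Your argument is correct on the domain where the interval $[r,1-\lambda r]$ is literally meaningful (i.e.\ $r\le g$), and I verified its computational core: the reduction of convexity of $R=\sqrt{Q}$ to $\mathrm{disc}(Q)\le 0$, the collapse of that condition under $|\alpha|^2\le r(1-r)$ to $\Phi(r)\ge 0$, and the factorization with roots $1/(\lambda+1)=g$ and $(2-\lambda)/(\lambda+1)$. But your route is genuinely different from the paper's. The paper also writes $S$ as a composition through a scalar variable, but chooses the \emph{determinant} $D(s)=s(1-s)-\kappa_s^2$ rather than the Bloch radius: it first shows (as a standalone lemma) that over qubit states $S$ is an increasing, concave function of $D$, and then notes that $D(s)$ is a quadratic in $s$ whose leading coefficient $-\bigl(1+\lambda|\alpha|^2/[(\lambda+1)r-1]^2\bigr)$ is manifestly negative; concavity of $S(s)$ follows from (nondecreasing concave)$\,\circ\,$(concave) with no case analysis and no discriminant estimate. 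Your choice of variable is what forces the hard step: convexity of the square root of a quadratic is a discriminant condition, and all your work goes into certifying it. One step you should make explicit: substituting the maximal coherence $|\alpha|^2=r(1-r)$ is only valid because, after pulling out $c:=|\alpha|^2/[(\lambda+1)r-1]^2$, the coefficient of $c$ in the residual expression equals $(\lambda-1)^2[(\lambda+1)r-1]^2\ge 0$, so the discriminant is monotone in the coherence and the pure state is the worst case. This is true, but it is not implied by positivity alone and must be checked.

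The substantive limitation is scope, and it is not merely a convention issue. The paper applies this proposition to arbitrary initial states (Theorem~\ref{thmain} and the Bloch-ball scans of Figs.~\ref{fig3d} and \ref{fig2d}), including states with $r>g$, for which the accessible populations form the interval $[1-\lambda r,\,r]$ and the same $\kappa_s$ formula applies. In that regime your key lemma is false, not just unproven: take $\lambda=1/2$ and the pure state with $r=0.8$, $|\alpha|^2=0.16$; then $Q(s)=12s^2-12s+2.92$ has discriminant $+3.84>0$, so $R(s)$ is strictly \emph{concave} on the accessible interval $[0.6,0.8]$, and your decomposition $S''=g''(R)\,(R')^2+g'(R)\,R''$ becomes a sum of a nonpositive and a nonnegative term, hence inconclusive. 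The proposition itself still holds there \textemdash\ in this example $D(s)=-3s^2+3s-0.48$ is concave, so the paper's determinant argument goes through unchanged \textemdash\ but your approach cannot be patched to cover it, and a relabeling $\ket0\leftrightarrow\ket1$ (which sends $\lambda\mapsto 1/\lambda>1$) does not restore the discriminant inequality. So as it stands your proof covers exactly the half of the state space where $r\le g$; the paper's proof, by picking the intermediate variable whose $s$-dependence is a concave quadratic for trivial sign reasons, covers all $\rho$ uniformly.
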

We will prove this proposition through several steps. We will largely exploit the simplicity of the qubit case, wherein all spectral properties of a density operator reduce to functions of a single parameter. In particular, consider the von Neumann entropy $S(\sigma)$, introduced already, and the determinant, which we shall denote $D(\sigma)$. They can both be expressed in terms of a single parameter. One possible choice for this parameter is the smaller of the two eigenvalues of $\sigma$, which we here denote $t$; note that $t\in[0,1/2]$. As a function of $t$, the von Neumann entropy and determinant are
\begin{align}
S(t)&=h(t)\equiv-t\log_2t-(1-t)\log_2(1-t);\nonumber\\
D(t)&=t(1-t).
\end{align}
We note that our use of the symbols $S$ and $D$ here is to refer not to specific functional forms, but rather to the von Neumann entropy and the determinant treated as variables. When one of these symbols is followed by an argument, it is then (and only then) intended to convey the behaviour of the variable as a function of the said argument. In particular, this means that the following functional forms are all distinct, even though they all represent the von Neumann entropy:
\begin{enumerate}
\item$S(s)$ as a function of $s$ (which parametrizes the special family $\eta[s,\kappa_se^{i\phi}]$ of density operators);
\item$S(t)$ as a function of $t$ (the smaller eigenvalue of the density operator);
\item$S(\sigma)$ as a function of $\sigma$ (the density operator);
\item$S(D)$ as a function of $D$ (the determinant of the density operator).
\end{enumerate}
\begin{alem}\label{DS}
Over qubit density operators, the von Neumann entropy $S$ is an invertible function of the determinant $D$; specifically, $S$ is a strictly increasing, concave function of $D$.
\end{alem}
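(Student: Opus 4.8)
The plan is to reduce the statement to an elementary single-variable calculus problem, using as the variable the smaller eigenvalue $t\in[0,1/2]$, in terms of which the paper has already recorded $S=h(t)$ and $D=t(1-t)$. First I would observe that $D$ is strictly increasing on $[0,1/2]$, since $dD/dt=1-2t>0$ except at the endpoint $t=1/2$; hence $t$ is a well-defined smooth function of $D\in[0,1/4]$. Because $h$ is likewise strictly increasing on $[0,1/2]$, the composite map $D\mapsto S$ is strictly increasing and invertible, which already settles the monotonicity and invertibility claims. Monotonicity can be confirmed quantitatively by the chain rule, using $h'(t)=\log_2\frac{1-t}{t}$:
\be
\fr{dS}{dD}=\fr{h'(t)}{dD/dt}=\fr{\log_2\frac{1-t}{t}}{1-2t},
\ee
which is positive on $(0,1/2)$ since both numerator and denominator are.

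For concavity the substantive step is to show $d^2S/dD^2\le 0$. Differentiating once more and using $dt/dD=1/(1-2t)$, the sign of $d^2S/dD^2$ is governed — after clearing the positive factor $(1-2t)^3$ and the positive constant $1/\ln 2$ — by
\be
M(t):=-\fr1t+\fr1{1-t}+2\ln\fr{1-t}{t},
\ee
where I have used the partial-fraction identity $\frac{1-2t}{t(1-t)}=\frac1t-\frac1{1-t}$ to simplify the rational part. It then suffices to prove $M(t)\le 0$ on $(0,1/2)$.

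The main obstacle, and the crux of the whole argument, is this last sign analysis; I expect it to resolve cleanly through a telescoping differentiation that collapses into a perfect square. Explicitly, I would compute
\be
M'(t)=\fr{1}{t^2}+\fr{1}{(1-t)^2}-\fr{2}{t}-\fr{2}{1-t}=(1-2t)\left(\fr{1}{t^2}-\fr{1}{(1-t)^2}\right)=\fr{(1-2t)^2}{t^2(1-t)^2}\ge 0,
\ee
so $M$ is nondecreasing on $(0,1/2)$. Since $M(1/2)=0$, it follows that $M(t)\le 0$ throughout $(0,1/2)$, with strict inequality away from $t=1/2$. This yields $d^2S/dD^2\le 0$ and hence the desired concavity (strict on the interior). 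The only remaining care is at the boundary $t\to 0$ corresponding to pure states, where both $S$ and $D$ remain continuous, so the concavity extends to the closed interval $[0,1/4]$.
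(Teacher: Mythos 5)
Your proof is correct, and structurally it is the same argument as the paper's: parametrize by the smaller eigenvalue $t\in[0,1/2]$, get invertibility and monotonicity from $\dot S(t),\dot D(t)>0$, then use the chain rule to reduce concavity to a sign analysis of an auxiliary function of $t$. Where you diverge is in the normalization of that auxiliary function, and your choice is genuinely cleaner. The paper clears the full denominator and studies $f(t)=1-2t-2t(1-t)\ln\frac{1-t}{t}$, which is your $M(t)$ up to a positive factor: $M(t)=-f(t)/[t(1-t)]$, so the two sign claims ($f\ge0$ versus $M\le0$) are equivalent. The paper then argues $f\ge 0$ from the boundary values $f(0^+)=1$, $f(1/2)=0$ and monotonicity, but its derivative $\dot f(t)=-2(1-2t)\ln\frac{1-t}{t}$ still requires tracking the sign of the logarithm \emph{-} and the paper in fact misstates it (it asserts $\dot f>0$, whereas $\dot f<0$ on $(0,1/2)$; the intended conclusion survives because $f$ \emph{decreases} from $1$ to $0$, hence is nonnegative). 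Your division by $t(1-t)$ makes the derivative collapse to the manifestly nonnegative perfect square $M'(t)=(1-2t)^2/[t^2(1-t)^2]$, so monotonicity needs no sign bookkeeping at all, and the single evaluation $M(1/2)=0$ finishes the proof. Both routes are rigorous once the paper's typo is repaired; yours is marginally more robust at exactly the step where the paper slipped.
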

\begin{proof}
Recall that both these quantities are effectively functions of the single parameter $t\in[0,1/2]$. Both functions are well\hyp defined and continuous in the interior of this region; $S(t=0)$ can be set to $0$ using the limit as $t\to0^+$. Denoting the total derivative with respect to $t$ by an overhead dot,
\begin{align}
\dot S(t)&=\log_2\fr{1-t}t;\nonumber\\
\dot D(t)&=1-2t.\label{SDt}
\end{align}
These are both well\hyp defined and strictly positive in the interior of the parametric region. In other words, both $S(t)$ and $D(t)$ are well\hyp defined and strictly increasing in the region. Therefore, $S$ is a strictly increasing function of $D$. It remains to show the function's concavity.

First, using Eqs.~\eqref{SDt}, we have
\begin{align}
\ddot S(t)&=-\fr{\log_2e}{t(1-t)};\nonumber\\
\ddot D(t)&=-2.\label{ddt}
\end{align}
Also from Eqs.~\eqref{SDt},
\be
\left.\fr{\rd S}{\rd D}\right|_t=\fr{\dot S(t)}{\dot D(t)}=\fr{\log_2\fr{1-t}t}{1-2t},
\ee
admitting the definition
\be
\left.\fr{\rd S}{\rd D}\right|_{t=1/2}=2\log_2e
\ee
through L'H\^opital's rule and Eqs.~\eqref{ddt}. Moving on,
\be
\left.\fr{\rd^2S}{\rd D^2}\right|_t=\fr1{\dot D(t)}\left.\fr\rd{\rd t}\left(\fr{\rd S}{\rd D}\right)\right|_t=\left(-\log_2e\right)\fr{1-2t-2t(1-t)\ln\fr{1-t}t}{t(1-t)(1-2t)^3}.\label{ddSD}
\ee
The denominator is nonnegative for $t\in[0,1/2]$. Now let
\be
f(t):=1-2t-2t(1-t)\ln\fr{1-t}t.
\ee
Evidently, $f(1/2)=0$, while one may show easily (e.g., using L'H\^opital's rule), that
\be
\lim\limits_{t\to0^+}f(t)=1.
\ee
The function is well\hyp defined and smooth in the interior of the region, where
\be
\dot f(t)=-2(1-2t)\ln\fr{1-t}t>0.
\ee
Concavity of $S(D)$ follows from Eq.~\eqref{ddSD}.
\end{proof}
Note that this functional relationship holds generally over all qubit states, although we are only interested in the parametric family $\eta[s,\kappa_se^{i\phi}]$.

Now let us return to our objective of proving Proposition~\ref{Scon}, which concerns the behaviour of $S(s)$. Note that $D$ is also effectively a function of $s$, given by
\be
D(s)=s(1-s)-\kappa_s^2=s(1-s)-\left|\alpha\right|^2\fr{\left[\lambda s+r-1\right]\left[\lambda r+s-1\right]}{\left[(\lambda+1)r-1\right]^2}.
\ee
Denoting the total derivative with respect to $s$ by an apostrophe, we have
\begin{align}
D'(s)&=1-2s-\left|\alpha\right|^2\fr{2\lambda s+\lambda^2r-\lambda+r-1}{\left[(\lambda+1)r-1\right]^2};\nonumber\\
D''(s)&=-2-\fr{2\lambda\left|\alpha\right|^2}{\left[(\lambda+1)r-1\right]^2}.\label{dDs}
\end{align}
The details of these derivatives are unimportant to us; what is relevant is that they are both well\hyp defined in general, as well as that $D''(s)$ is manifestly negative.

Now, exploiting the bijective relationship between $S$ and $D$, we have
\begin{align}
S'(s)&=D'(s)\fr{\rd S}{\rd D};\nonumber\\
S''(s)&=D''(s)\fr{\rd S}{\rd D}+\left[S'(s)\right]^2\fr{\rd^2S}{\rd D^2}.
\end{align}
While $D''(s)<0$ as evident from Eq.~\eqref{dDs}, Lemma~\ref{DS} establishes that $\rd S/\rd D>0$ and $\rd^2S/\rd D^2\le0$. It follows that $S''(s)\le0$, proving Proposition~\ref{Scon}.\qed

This leads immediately to
\begin{coro}
The infimum in Eq.~(\ref{defxi}) is attained for $s_j\in\{r,1-\lambda r\}$. Consequently,
\begin{align}
\xi(\bar s)&=q_{\bar s}S(r)+(1-q_{\bar s})S(1-\lambda r)\nonumber\\
&=q_{\bar s}S(\rho)+(1-q_{\bar s})h(1-\lambda r),
\end{align}
where $q_{\bar s}$ is the unique number satisfying $q_{\bar s}r+(1-q_{\bar s})(1-\lambda r)=\bar s$, namely
\be\label{defq}
q_{\bar s}=\fr{\bar s+\lambda r-1}{(1+\lambda)r-1}.
\ee
\end{coro}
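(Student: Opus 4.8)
The plan is to reduce the minimization defining $\xi(\bar s)$ to its endpoints by invoking the concavity of $S(s)$ just established in Proposition~\ref{Scon}. The guiding principle is standard: minimizing the \emph{average} of a concave function over all distributions with a prescribed mean forces the optimizer to concentrate on the extreme points of the feasible interval $[r,1-\lambda r]$, since a concave function lies above any of its chords.

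Concretely, I would first write each admissible $s_j\in[r,1-\lambda r]$ as the convex combination $s_j=a_j\,r+(1-a_j)(1-\lambda r)$ with $a_j\in[0,1]$. Concavity of $S$ then yields the pointwise bound $S(s_j)\ge a_j S(r)+(1-a_j)S(1-\lambda r)$. Multiplying by $p_j$, summing over $j$, and setting $\bar a:=\sum_j p_j a_j$, I obtain $\sum_j p_j S(s_j)\ge \bar a\,S(r)+(1-\bar a)S(1-\lambda r)$. The mean constraint $\sum_j p_j s_j=\bar s$ is equivalent to $\bar a\,r+(1-\bar a)(1-\lambda r)=\bar s$, which pins $\bar a$ down to the single value solving this linear equation---precisely the $q_{\bar s}$ of Eq.~\eqref{defq}. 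This establishes $\xi(\bar s)\ge q_{\bar s}S(r)+(1-q_{\bar s})S(1-\lambda r)$, and the bound is saturated by the two-point distribution placing weights $q_{\bar s}$ and $1-q_{\bar s}$ on $s=r$ and $s=1-\lambda r$; this distribution is feasible (with $q_{\bar s}\in[0,1]$ whenever $\bar s\in[r,1-\lambda r]$) and meets the mean constraint, so the infimum is attained with $s_j\in\{r,1-\lambda r\}$.

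It remains to identify the two endpoint entropies with the claimed expressions, and here I would evaluate $\kappa_s$ at the boundary. At $s=r$ one finds $\kappa_r=\left|\alpha\right|$, so the extremal state $\eta[r,\kappa_r e^{i\phi}]$ shares the eigenvalues of the initial $\rho=\eta[r,\alpha]$ and hence $S(r)=S(\rho)$; at $s=1-\lambda r$ the factor $\lambda r+s-1$ under the square root in $\kappa_s$ vanishes, so $\kappa_{1-\lambda r}=0$ and the state is diagonal with entries $1-\lambda r$ and $\lambda r$, giving $S(1-\lambda r)=h(1-\lambda r)$. Solving $q_{\bar s}\,r+(1-q_{\bar s})(1-\lambda r)=\bar s$ explicitly then reproduces Eq.~\eqref{defq}.

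The conceptual content lies entirely in the concavity of $S$; everything downstream is bookkeeping. The only step requiring genuine care is the boundary evaluation of $\kappa_s$, since it is what certifies that the two distinguished endpoints correspond respectively to the (phase-insensitive) entropy of the input state and to a purely diagonal state on the Z axis. I expect that verification---rather than the extremal-distribution argument---to be the main thing to get right.
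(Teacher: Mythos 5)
Your proof is correct and follows the same route the paper intends: the paper derives this corollary as an immediate consequence of the concavity of $S(s)$ (Proposition~\ref{Scon}), and your chord argument with the mean constraint pinning $\bar a=q_{\bar s}$ is precisely the standard fleshing-out of that step. Your boundary evaluations $\kappa_r=\left|\alpha\right|$ (hence $S(r)=S(\rho)$) and $\kappa_{1-\lambda r}=0$ (hence $S(1-\lambda r)=h(1-\lambda r)$) are also exactly what justifies the second line of the corollary, which the paper states without comment.
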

Note that $q_{\bar s}$ is well\hyp defined except when $r=g$, which is anyway a trivial and uninteresting case. This brings us to our main result:
\begin{athm}[Theorem~\ref{thmain} of main text]\label{athmain}
For a qubit memory $\bM$ with Gibbs state $\gamma=\eta[g,0]$, an optimal code accessible from an initial state $\rho=\eta[r,\alpha]$ under qubit TO is
\be
\cC_{\mathrm{opt}}=\left\{\left(\fr{q_{\tilde s}}2,\eta\left[r,\pm\alpha\right]\right),\left(1-q_{\tilde s},\eta\left[1-\lambda r,0\right]\right)\right\},
\ee
where $\lambda$ is as defined in Eq.~(\ref{deflamb}) and $q_{\tilde s}$ is determined by $\tilde s$ as in Eq.~(\ref{defq}), with
\be
\tilde s:=\argmax_{\bar s\in[r,1-\lambda r]}\left[h(\bar s)-q_{\bar s}S(\rho)-(1-q_{\bar s})h(1-\lambda r)\right].
\ee
The thermal information capacity (TIC) of $\rho$ is given by the Holevo capacity of $\cC_{\mathrm{opt}}$:
\be
I(\rho)=\chi\left(\cD_{\mathrm{opt}}\right)=h(\tilde s)-q_{\tilde s}S(\rho)-(1-q_{\tilde s})h(1-\lambda r).
\ee
\end{athm}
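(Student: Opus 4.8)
The plan is to assemble the preceding results into the stated optimal code, so that the theorem reduces to an \emph{identification} of codewords rather than a fresh optimization. First I would recall the chain of reductions already in place: the Observation restricts the search to codes built from extreme points $\eta[s,\kappa_s e^{i\phi}]$ of $\sth(\rho)$; the pairing Lemma further restricts to $Z$-symmetric pairs, collapsing the Holevo quantity to $\chi(\cC)=h(\bar s)-\sum_j p_j S(s_j)$ with $\bar s=\sum_j p_j s_j$; and minimizing the second term at fixed $\bar s$ yields $I_{\mathrm{th}}(\rho)=\max_{\bar s\in[r,1-\lambda r]}[h(\bar s)-\xi(\bar s)]$. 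The Corollary (a consequence of the concavity of $S(s)$, Proposition~\ref{Scon}) then supplies the closed form $\xi(\bar s)=q_{\bar s}S(\rho)+(1-q_{\bar s})h(1-\lambda r)$, reducing the problem to a single-variable maximization whose maximizer defines $\tilde s$. Substituting gives the claimed value $I(\rho)=h(\tilde s)-q_{\tilde s}S(\rho)-(1-q_{\tilde s})h(1-\lambda r)$ at once.

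Second, I would reconstruct the explicit codewords from the optimizer $\tilde s$. The Corollary states that the minimizing distribution in $\xi(\tilde s)$ is supported on the two endpoints $s\in\{r,\,1-\lambda r\}$, with weights $q_{\tilde s}$ and $1-q_{\tilde s}$; the key remaining step is to translate these $s$-values into actual states via $\kappa_s$. Evaluating $\kappa_s$ at the endpoints, I expect $\kappa_r=|\alpha|$ — so that, choosing the phase $\phi$ with $\kappa_r e^{i\phi}=\alpha$, the pair at $s=r$ is exactly $\eta[r,\pm\alpha]=\{\rho,\,Z\rho Z\}$ — and $\kappa_{1-\lambda r}=0$, so the tip codeword at $s=1-\lambda r$ is the incoherent state $\eta[1-\lambda r,0]$. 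The pairing Lemma then splits the weight $q_{\tilde s}$ at $s=r$ evenly between the two coherent codewords, giving probabilities $q_{\tilde s}/2$ each, while the tip state retains weight $1-q_{\tilde s}$; this is precisely $\cC_{\mathrm{opt}}$.

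Finally I would verify internal consistency: that the reconstructed code has mean $\bar s=q_{\tilde s}\,r+(1-q_{\tilde s})(1-\lambda r)=\tilde s$, which is exactly the relation defining $q_{\tilde s}$ in Eq.~\eqref{defq}, and hence that $\chi(\cC_{\mathrm{opt}})$ equals the maximized expression $h(\tilde s)-\xi(\tilde s)$. The main obstacle is \emph{not} the optimization itself — all of its substance is carried by Proposition~\ref{Scon} and its Corollary, already established — but rather the correct endpoint evaluation of $\kappa_s$: it is the two boundary computations $\kappa_r=|\alpha|$ and $\kappa_{1-\lambda r}=0$, together with the phase alignment $\kappa_r e^{i\phi}=\alpha$, that convert the abstract two-point optimizer into the concrete triple of codewords $\{\rho,\,Z\rho Z,\,\eta[1-\lambda r,0]\}$ asserted in the statement.
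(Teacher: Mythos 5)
Your proposal is correct and follows essentially the same route as the paper: the theorem there is stated as the immediate assembly of the Observation, the pairing Lemma, Proposition~\ref{Scon}, and its Corollary, exactly as you describe. Your explicit endpoint evaluations $\kappa_r=|\alpha|$ and $\kappa_{1-\lambda r}=0$ (and the resulting identifications $S(r)=S(\rho)$, $S(1-\lambda r)=h(1-\lambda r)$) are precisely the steps the paper uses implicitly in the Corollary to turn the two-point optimizer into the codewords $\{\rho,\,Z\rho Z,\,\eta[1-\lambda r,0]\}$.
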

The above optimization can easily be carried out numerically, although we have been unable to find a closed analytical form for it.

\section{The thermal information capacity in relation to other resources}\label{appscat}
Fig.~\ref{fig2d} of the main matter shows a scatter plot of the thermal information capacity (TIC) vs.\ the Gibbs free energy for a representative sample of qubit initial states. Why did we choose the Gibbs free energy as a reference against which to compare the TIC, and not other relevant measures of resourcefulness of the state? An obvious motivation for this is the asymptotic convergence of these two quantities. Nevertheless, we did also study the TIC's relation with two other resourceful aspects of the state, namely its purity (measured by the von Neumann ``negentropy'', $1-S[\rho]$) and its relative entropy of coherence with respect to the energy eigenbasis, given by
\be
C(\rho)=S\left(\rho\|\rho_\mathrm{diag}\right),
\ee
where $\rho_\mathrm{diag}$ is the diagonal part of $\rho$ in the energy eigenbasis. Like the Gibbs free energy, both the purity and the coherence are useful properties in information\hyp processing tasks, and never increase under thermal operations. In the context of the task of information storage on a memory, the purity exactly measures the information capacity in the case of an energy\hyp degenerate memory. On the other hand, states with the highest coherence, such as $\proj+$ and $\proj-$, achieve maximum TIC regardless of the temperature and energy levels. Thus, both the purity and the coherence are ostensibly indicators of the TIC. Indeed, Fig.~\ref{figPuIC} and Fig.~\ref{figCIC} bear this out. However, we see by comparing these with Fig.~\ref{fig2d} that the Gibbs free energy is the resourcefulness measure that is most strongly correlated with the TIC at all temperatures.
\begin{figure}[h]
    \subfloat[$T=0.1~\Delta E/k_{\mathrm B}$]{
    \includegraphics[width=.225\textwidth]{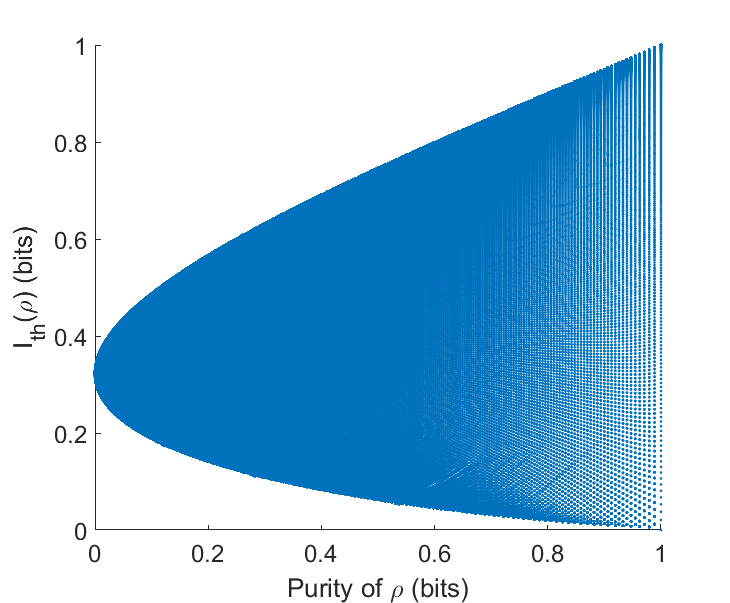}
    }
    \subfloat[$T=\Delta E/k_{\mathrm B}$]{
    \includegraphics[width=.225\textwidth]{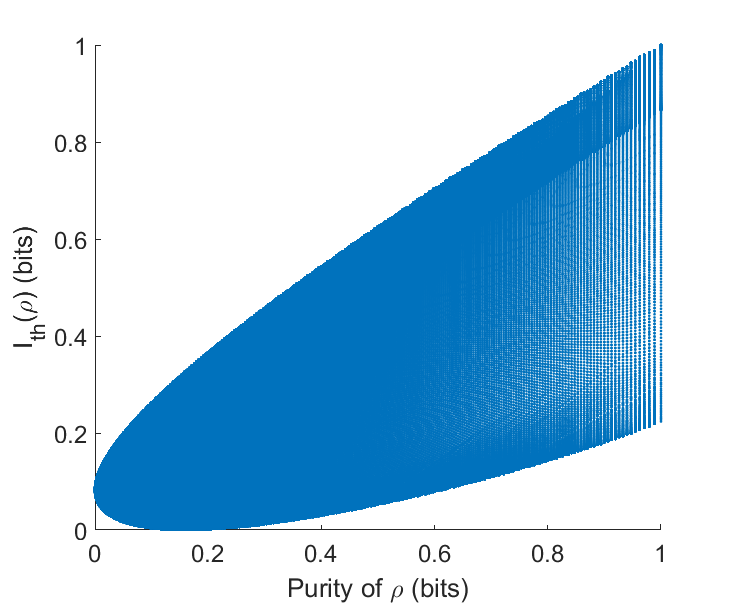}
    }
    \subfloat[$T=2~\Delta E/k_{\mathrm B}$]{
    \includegraphics[width=.225\textwidth]{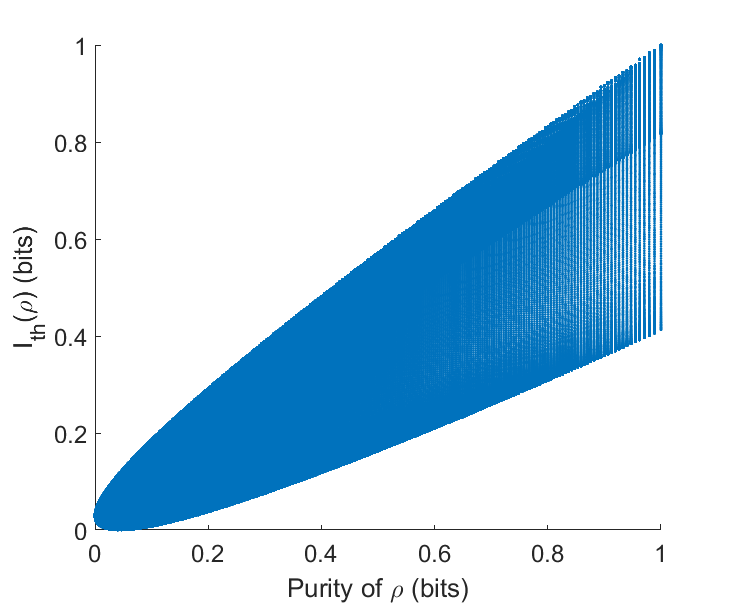}
    }
    \subfloat[$T\to\infty$]{
    \includegraphics[width=.225\textwidth]{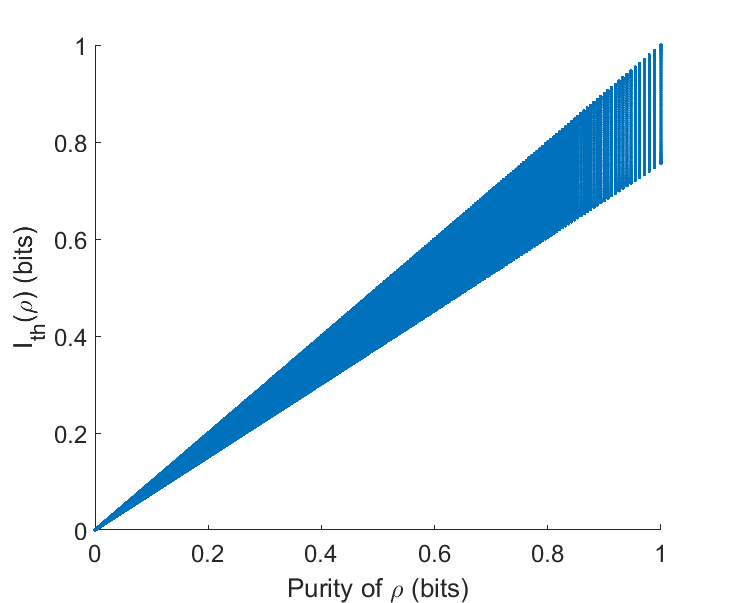}
    }
    \caption{Scatter plots of the thermal information capacity, $I_\mathrm{th}(\rho)$, vs.\ purity, $1-S(\rho)$, for qubit memory states $\rho$.}\label{figPuIC}
\end{figure}

\begin{figure}[h]
    \subfloat[$T=0.1~\Delta E/k_{\mathrm B}$]{
    \includegraphics[width=.225\textwidth]{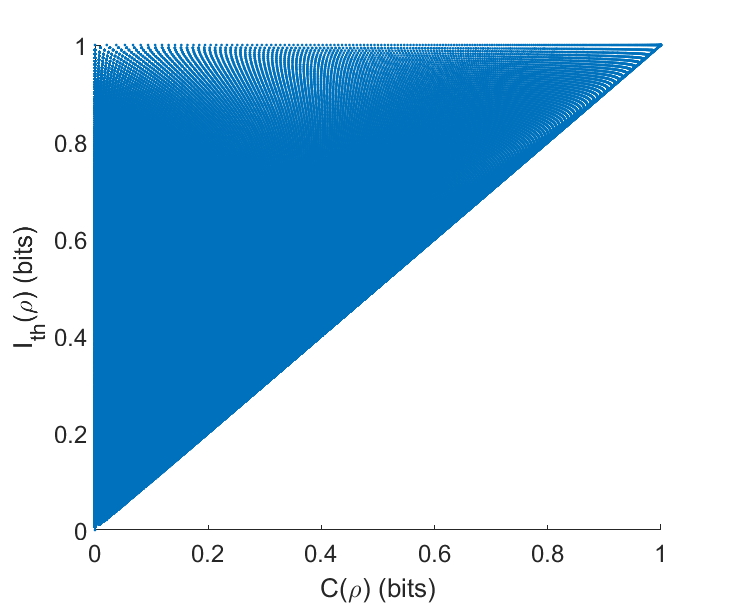}
    }
    \subfloat[$T=\Delta E/k_{\mathrm B}$]{
    \includegraphics[width=.225\textwidth]{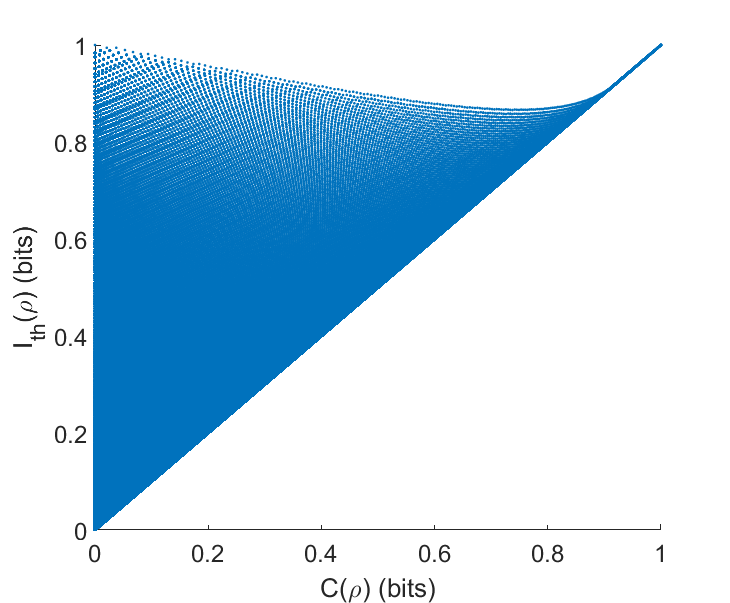}
    }
    \subfloat[$T=2~\Delta E/k_{\mathrm B}$]{
    \includegraphics[width=.225\textwidth]{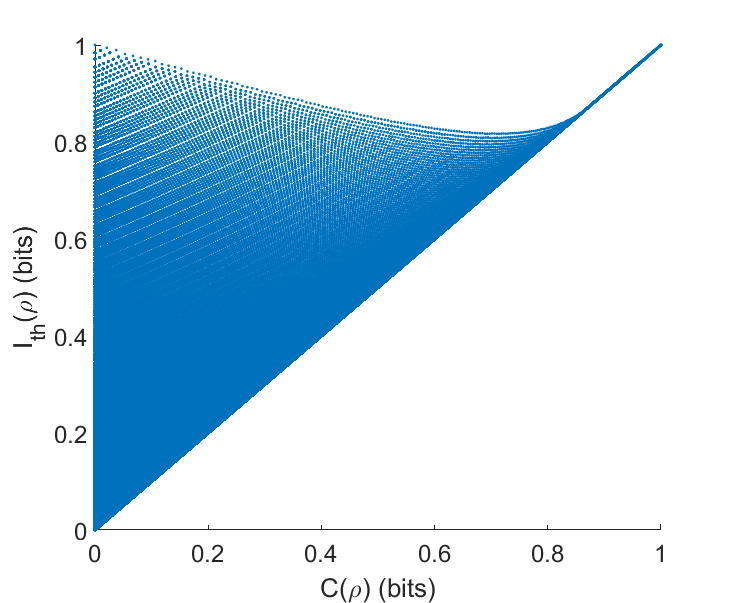}
    }
    \subfloat[$T\to\infty$]{
    \includegraphics[width=.225\textwidth]{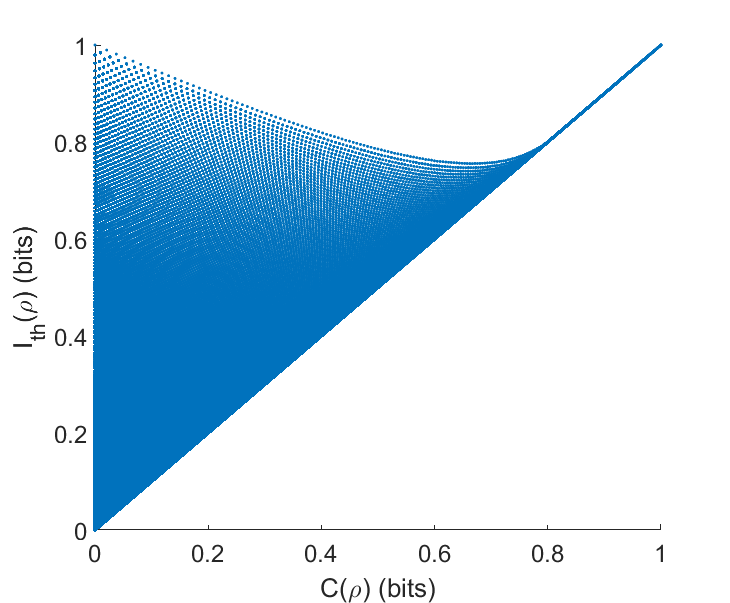}
    }
    \caption{Scatter plots of the thermal information capacity, $I_\mathrm{th}(\rho)$, vs.\ relative entropy of coherence, $C(\rho)$, for qubit memory states $\rho$.}\label{figCIC}
\end{figure}

\section{Writing on a two-level thermal memory using Jaynes\n Cummings interaction}\label{appimp}
Denote the free Hamiltonian of the memory $\bM$ by
\be
H_\bM=E_0\proj0+E_1\proj1.
\ee
Define $\omega:=\Delta E/\hbar$, and $\lambda:=\exp\left[-\Delta E/k_\rB T\right]$. In order to implement the encoding scheme of Theorem~\ref{thmain}, we need to be able to perform two transformations:
\begin{subequations}
\begin{align}
\rho\equiv r\proj0+\alpha\ket0\bra1+\alpha^*\ket1\bra0+(1-r)\proj1&\mapsto r\proj0-\alpha\ket0\bra1-\alpha^*\ket1\bra0+(1-r)\proj1;\label{trans1}\\
\rho&\mapsto(1-\lambda r)\proj0+\lambda r\proj1.\label{trans2}
\end{align}
\end{subequations}
The first, which only entails rotating the phase of the off\hyp diagonal elements of the density operator, can be achieved simply through local evolution under the memory's free Hamiltonian.

To approximate the second transformation, we propose to couple the memory to a bosonic mode bath through a Jaynes\n Cummings interaction. The bath $\bB$ is a single bosonic mode with annihilation operator $\hat b$ and free Hamiltonian
\be
H_\bB=\hbar\omega\left(\hat b^\dagger\hat b+\fr12\right).
\ee
We couple the memory with the bath through the Jaynes\n Cummings interaction term
\be
H_\rI=\Omega\left(\ket1\bra0_\bM\otimes\hat b_\bB+\ket0\bra1_\bM\otimes\hat b_\bB^\dagger\right),
\ee
where the coupling strength $\Omega\in\bbR$ can be chosen according to convenience.

Denoting the $n$\hyp boson state of $\bB$ by $\ket{b_n}$, under the global Hamiltonian
\be
H_{\bM\bB}\equiv H_\bM\otimes\eins_\bB+H_\bB\otimes\eins_\bM+H_\rI,
\ee
the higher\hyp energy components of the state undergo the well\hyp known Rabi oscillations
\be\label{Rabi}
\proj0_\bM\otimes\proj{b_{n+1}}_\bB\longleftrightarrow\proj1_\bM\otimes\proj{b_n}_\bB,
\ee
while the ground state ($\proj0_\bM\otimes\proj{b_{0}}_\bB$) component stays invariant.

The thermal state of the bath at temperature $T$ is
\be
\gamma_\bB=(1-\lambda)\sum_{n=0}^\infty\lambda^n\proj{b_n}.
\ee
For a general blank tape state $\rho_\bM\equiv\rho$ as in Eq.~\eqref{trans1}, the initial state of the composite $\bM\bB$ is the uncorrelated product $\rho_\bM\otimes\gamma_\bB$. The ground\hyp state amplitude of this state is $r(1-\lambda)$. As mentioned above, this component stays invariant, while the higher\hyp energy components oscillate within their degenerate two\hyp dimensional subspaces, as described in \eqref{Rabi}.

To achieve the transformation of \eqref{trans2}, the $\ket0_\bM$ components in all of these oscillatory terms must be transformed to $\ket1_\bM$ and vice versa. But the frequencies of the Rabi oscillations within different energy levels are not relatively rational, scaling instead as $\sqrt n$ for integer $n$, whence the desired transformations within different energy subspaces do not occur synchronously after any finite time of evolution. Nevertheless, good approximations to the desired transformation can be implemented in reasonable time. We used numerical computation to find the time taken by the Jaynes\n Cummings interaction to achieve various fractions of the optimal capacity. These ranged between $0.15$ and $1.57$ in units of $\Omega^{-1}$, where $\Omega$ is the strength of the coupling.


\begin{thebibliography}{10}

\bibitem{Szilard}
Leo Szilard.
\newblock On the decrease of entropy in a thermodynamic system by the
  intervention of intelligent beings.
\newblock {\em Systems Research and Behavioral Science}, 9(4):301--310, 1964.

\bibitem{LR14}
Harvey~S Leff and Andrew~F Rex.
\newblock {\em Maxwell's demon: entropy, information, computing}.
\newblock Princeton University Press, 2014.

\bibitem{Landauer}
R.~Landauer.
\newblock Irreversibility and heat generation in the computing process.
\newblock {\em IBM Journal of Research and Development}, 5(3):183--191, July
  1961.

\bibitem{Bennett03}
Charles~H. Bennett.
\newblock Notes on landauer's principle, reversible computation, and maxwell's
  demon.
\newblock {\em Studies in History and Philosophy of Science Part B: Studies in
  History and Philosophy of Modern Physics}, 34(3):501 -- 510, 2003.
\newblock Quantum Information and Computation.

\bibitem{BCP14}
T~Baumgratz, M~Cramer, and MB~Plenio.
\newblock Quantifying coherence.
\newblock {\em Physical review letters}, 113(14):140401, 2014.

\bibitem{XLF15}
Zhengjun Xi, Yongming Li, and Heng Fan.
\newblock Quantum coherence and correlations in quantum system.
\newblock {\em Scientific reports}, 5:10922, 2015.

\bibitem{YZCM15}
Xiao Yuan, Hongyi Zhou, Zhu Cao, and Xiongfeng Ma.
\newblock Intrinsic randomness as a measure of quantum coherence.
\newblock {\em Physical Review A}, 92(2):022124, 2015.

\bibitem{SSDBA15}
Alexander Streltsov, Uttam Singh, Himadri~Shekhar Dhar, Manabendra~Nath Bera,
  and Gerardo Adesso.
\newblock Measuring quantum coherence with entanglement.
\newblock {\em Physical review letters}, 115(2):020403, 2015.

\bibitem{WY16}
Andreas Winter and Dong Yang.
\newblock Operational resource theory of coherence.
\newblock {\em Physical review letters}, 116(12):120404, 2016.

\bibitem{CG16}
Eric Chitambar and Gilad Gour.
\newblock Critical examination of incoherent operations and a physically
  consistent resource theory of quantum coherence.
\newblock {\em Physical review letters}, 117(3):030401, 2016.

\bibitem{CH16}
Eric Chitambar and Min-Hsiu Hsieh.
\newblock Relating the resource theories of entanglement and quantum coherence.
\newblock {\em Physical review letters}, 117(2):020402, 2016.

\bibitem{MYGVG16}
Jiajun Ma, Benjamin Yadin, Davide Girolami, Vlatko Vedral, and Mile Gu.
\newblock Converting coherence to quantum correlations.
\newblock {\em Physical review letters}, 116(16):160407, 2016.

\bibitem{PJF16}
Yi~Peng, Yong Jiang, and Heng Fan.
\newblock Maximally coherent states and coherence-preserving operations.
\newblock {\em Physical Review A}, 93(3):032326, 2016.

\bibitem{ZSLF16}
Yu-Ran Zhang, Lian-He Shao, Yongming Li, and Heng Fan.
\newblock Quantifying coherence in infinite-dimensional systems.
\newblock {\em Physical Review A}, 93(1):012334, 2016.

\bibitem{SZAW03}
Marlan~O Scully, M~Suhail Zubairy, Girish~S Agarwal, and Herbert Walther.
\newblock Extracting work from a single heat bath via vanishing quantum
  coherence.
\newblock {\em Science}, 299(5608):862--864, 2003.

\bibitem{SU09}
Takahiro Sagawa and Masahito Ueda.
\newblock Minimal energy cost for thermodynamic information processing:
  measurement and information erasure.
\newblock {\em Physical review letters}, 102(25):250602, 2009.

\bibitem{KSdLU11}
Sang~Wook Kim, Takahiro Sagawa, Simone De~Liberato, and Masahito Ueda.
\newblock Quantum szilard engine.
\newblock {\em Physical review letters}, 106(7):070401, 2011.

\bibitem{DRRV11}
Oscar~CO Dahlsten, Renato Renner, Elisabeth Rieper, and Vlatko Vedral.
\newblock Inadequacy of von neumann entropy for characterizing extractable
  work.
\newblock {\em New Journal of Physics}, 13(5):053015, 2011.

\bibitem{Reth}
Fernando~GSL Brand{\~a}o, Micha{\l} Horodecki, Jonathan Oppenheim, Joseph~M
  Renes, and Robert~W Spekkens.
\newblock {Resource Theory of Quantum States Out of Thermal Equilibrium}.
\newblock {\em Physical review letters}, 111(25):250404, 2013.

\bibitem{Nan}
Micha{\l} Horodecki and Jonathan Oppenheim.
\newblock {Fundamental limitations for quantum and nanoscale thermodynamics}.
\newblock {\em Nature communications}, 4, 2013.

\bibitem{SSP14}
Paul Skrzypczyk, Anthony~J Short, and Sandu Popescu.
\newblock {Work extraction and thermodynamics for individual quantum systems}.
\newblock {\em Nature communications}, 5, 2014.

\bibitem{BVMG15}
Felix Binder, Sai Vinjanampathy, Kavan Modi, and John Goold.
\newblock Quantum thermodynamics of general quantum processes.
\newblock {\em Physical Review E}, 91(3):032119, 2015.

\bibitem{PHS15}
Juan~MR Parrondo, Jordan~M Horowitz, and Takahiro Sagawa.
\newblock Thermodynamics of information.
\newblock {\em Nature physics}, 11(2):131, 2015.

\bibitem{Sec}
Fernando Brand\~{a}o, Micha{\l} Horodecki, Nelly Ng, Jonathan Oppenheim, and
  Stephanie Wehner.
\newblock The second laws of quantum thermodynamics.
\newblock {\em Proceedings of the National Academy of Sciences},
  112(11):3275--3279, March 2015.

\bibitem{NU}
Gilad Gour, Markus~P M{\"u}ller, Varun Narasimhachar, Robert~W Spekkens, and
  Nicole~Yunger Halpern.
\newblock The resource theory of informational nonequilibrium in
  thermodynamics.
\newblock {\em Physics Reports}, 583:1--58, 2015.

\bibitem{GJBDM17}
Gilad Gour, David Jennings, Francesco Buscemi, Runyao Duan, and Iman Marvian.
\newblock Quantum majorization and a complete set of entropic conditions for
  quantum thermodynamics.
\newblock {\em arXiv preprint arXiv:1708.04302}, 2017.


\bibitem{NG17}
Varun Narasimhachar and Gilad Gour.
\newblock Resource theory under conditioned thermal operations.
\newblock {\em Physical Review A}, 95(1):012313, 2017.

\bibitem{QMetro1}
Vittorio Giovannetti, Seth Lloyd, and Lorenzo Maccone.
\newblock Quantum metrology.
\newblock {\em Physical review letters}, 96(1):010401, 2006.

\bibitem{QMetro2}
Vittorio Giovannetti, Seth Lloyd, and Lorenzo Maccone.
\newblock Advances in quantum metrology.
\newblock {\em Nature photonics}, 5(4):222, 2011.

\bibitem{Lloyd08}
Seth Lloyd.
\newblock Enhanced sensitivity of photodetection via quantum illumination.
\newblock {\em Science}, 321(5895):1463--1465, 2008.

\bibitem{Tan08}
Si-Hui Tan, Baris Erkmen, Vittorio Giovannetti, Saikat Guha, Seth Lloyd, Lorenzo Maccone, Stefano Pirandola, and Jeffrey H. Shapiro
\newblock Quantum illumination with Gaussian states.
\newblock {\em Physical review letters} 101, 253601, 2008

\bibitem{Bradshaw17}
Mark Bradshaw, Syed Assad, Jing Yan Haw, Si-Hui Tan, Ping Koy Lam, and Mile Gu.
\newblock Overarching framework between Gaussian quantum discord and Gaussian quantum illumination
\newblock {\em Physical Review A} 95, 2, 2017



\bibitem{BG15}
Fernando~GSL Brand{\~a}o and Gilad Gour.
\newblock Reversible framework for quantum resource theories.
\newblock {\em Physical review letters}, 115(7):070503, 2015.

\bibitem{NYH17}
Nicole~Yunger Halpern.
\newblock Toward physical realizations of thermodynamic resource theories.
\newblock In {\em Information and Interaction}, pages 135--166. Springer, 2017.

\bibitem{CST}
Matteo Lostaglio, Kamil Korzekwa, David Jennings, and Terry Rudolph.
\newblock Quantum coherence, time-translation symmetry, and thermodynamics.
\newblock {\em Physical review X}, 5:021001, Apr 2015.

\bibitem{Coh}
Piotr \ifmmode \acute{C}\else \'{C}\fi{}wikli\ifmmode~\acute{n}\else
  \'{n}\fi{}ski, Micha\l{} Studzi\ifmmode~\acute{n}\else \'{n}\fi{}ski,
  Micha\l{} Horodecki, and Jonathan Oppenheim.
\newblock Limitations on the evolution of quantum coherences: Towards fully
  quantum second laws of thermodynamics.
\newblock {\em Physical review letters}, 115, 2015.

\bibitem{NG15}
Varun Narasimhachar and Gilad Gour.
\newblock Low-temperature thermodynamics with quantum coherence.
\newblock {\em Nature communications}, 6, 2015.

\bibitem{catcoh}
Johan {\AA}berg.
\newblock Catalytic coherence.
\newblock {\em Physical review letters}, 113(15):150402, 2014.

\bibitem{KLOJ15}
Kamil Korzekwa, Matteo Lostaglio, Jonathan Oppenheim, and David Jennings.
\newblock The extraction of work from quantum coherence.
\newblock {\em arXiv preprint arXiv:1506.07875}, 2015.

\bibitem{KA16}
Philipp Kammerlander and Janet Anders.
\newblock Coherence and measurement in quantum thermodynamics.
\newblock {\em Scientific reports}, 6:22174, 2016.

\bibitem{nege}
Lidia {Del Rio}, Johan {\AA}berg, Renato Renner, Oscar Dahlsten, and Vlatko
  Vedral.
\newblock {The thermodynamic meaning of negative entropy}.
\newblock {\em Nature}, 474(7349):61--63, 2011.

\bibitem{PKSK13}
Jung~Jun Park, Kang-Hwan Kim, Takahiro Sagawa, and Sang~Wook Kim.
\newblock {Heat Engine Driven by Purely Quantum Information}.
\newblock {\em Physical review letters}, 111:230402, 2013.

\bibitem{BMC16}
Alexander~B Boyd, Dibyendu Mandal, and James~P Crutchfield.
\newblock Identifying functional thermodynamics in autonomous maxwellian
  ratchets.
\newblock {\em New Journal of Physics}, 18(2):023049, 2016.

\bibitem{FGPP17}
Gianluca Francica, John Goold, Francesco Plastina, and Mauro Paternostro.
\newblock {Daemonic ergotropy: enhanced work extraction from quantum correlations}.
\newblock {\em npj Quantum Information}, 3, 2017.

\end{thebibliography}

\begin{thebibliography}{1}

\bibitem{SReth}
Fernando~GSL Brand{\~a}o, Micha{\l} Horodecki, Jonathan Oppenheim, Joseph~M
  Renes, and Robert~W Spekkens.
\newblock {Resource Theory of Quantum States Out of Thermal Equilibrium}.
\newblock {\em Physical review letters}, 111(25):250404, 2013.

\bibitem{SNan}
Micha{\l} Horodecki and Jonathan Oppenheim.
\newblock {Fundamental limitations for quantum and nanoscale thermodynamics}.
\newblock {\em Nature communications}, 4, 2013.

\end{thebibliography}
\end{document}